\definecolor{amaranth}{rgb}{0.9, 0.17, 0.31}
\newcommand*{\addFileDependency}[1]{
  \typeout{(#1)}
  \@addtofilelist{#1}
  \IfFileExists{#1}{}{\typeout{No file #1.}}
}
\newcommand*{\myexternaldocument}[1]{%
    \externaldocument{#1}%
    \addFileDependency{#1.tex}%
    \addFileDependency{#1.aux}%
}
\DeclarePairedDelimiterX{\barpair}[2]{(}{)}{%
  #1\;\delimsize\|\;#2%
}
\theoremstyle{definition}
\newtheorem{theorem}{Theorem}
\newtheorem{remark}{Remark}
\newtheorem{lemma}{Lemma}
\newtheorem{definition}{Definition}
\newcommand{\id}{\text{id}}
\newcommand{\mcal}[1]{\mathcal{#1}}
\newcommand{\jami}{Jamio{\l}kowski }
\newcommand{\mds}[1]{\mathds{#1}}
\newcommand{\E}{\mathcal{E}}
\newcommand{\F}{\mathcal{F}}
\newcommand{\M}{\mathcal{M}}
\newcommand{\swap}{{\tt SWAP}}
\begin{document}

\title{Multipartite quantum states over time from two fundamental assumptions}

\author{Seok Hyung Lie}
\email{seokhyung@unist.ac.kr}
\affiliation{Department of Physics, Ulsan National Institute of Science and Technology (UNIST), Ulsan 44919, Republic of Korea}

\author{James Fullwood}
\email{fullwood@hainanu.edu.cn}
\affiliation{School of Mathematics and Statistics, Hainan University, Haikou, Hainan Province, 570228, China}

\date{\today}

\begin{abstract}
The theory of quantum states over time extends the density operator formalism into the temporal domain, providing a unified  of treatment of timelike and spacelike separated systems in quantum theory. Although recent results have characterized quantum states over time involving two timelike separated systems, it remains unclear how to consistently extend the notion of quantum states over time to multipartite temporal scenarios, such as those considered in studies of Leggett-Garg inequalities. In this Letter, we show that two simple assumptions uniquely single out the Markovian multipartite extension of bipartite quantum states over time, namely, linearity in the initial state and a quantum analog of conditionability for multipartite probability distributions. As a direct consequence of our result, we establish a canonical correspondence between multipartite QSOTs and Kirkwood-Dirac type quasiprobability distributions, which we show opens up the possibility of experimentally verifying the temporal correlations encoded in QSOTs via the recent experimental technique of simulating quasiprobability known as \emph{quantum snapshotting}. 
\end{abstract}

\pacs{Valid PACS appear here}
\maketitle

\textit{Introduction}---
In general relativity, Lorentzian manifolds provide a single mathematical formalism with which to treat both space and time. In quantum theory however, there exists a disparity in the mathematical formalisms one uses to investigate correlations across space on the one hand, and correlations across time on the other. In particular, while correlations across regions of space are encoded by multipartite density operators, temporal correlations arise out of interactions described by quantum channels, yielding multipartite probability distributions which may not in general be described by the density operator formalism. Is such a disparity between the ways in which quantum theory treats space and time merely an artifact of the historical development of quantum theory, or is it a reflection of some fundamental, intrinsic property of nature which forces quantum theory to be incompatible with the notion of spacetime?

Such a question was formally addressed in Ref.~\cite{HHPBS17}, where the notion of a \emph{quantum state over time} (QSOT) was introduced to give a more balanced treatment of space and time in quantum theory. A QSOT was first proposed as a bipartite hermitian operator of unit trace associated with a two-time process corresponding to a single run of a quantum channel, thus providing a spatiotemporal analog of a bipartite density operator. Contrary to density operators however, QSOTs are not positive in general, as the negative eigenvalues of a QSOT serve as a witness to temporal correlations which have no spatial analog.

In Ref.~\cite{FuPa22}, a construction of QSOTs was established in accordance with desiderata put forth in Ref.~\cite{HHPBS17}, and the operational significance of such QSOTs in terms of two-time correlation functions was recently addressed in Ref.~\cite{FuPa24}. For systems of qubits, bipartite QSOTs have been shown to encompass the construction of pseudo-density matrices~\cite{FJV15,LQDV23,HHPBS17}, which have found applications to a wide variety of topics including quantum channel capacity and the black hole information problem~\cite{Mar2020,MVVAPGDG21,JSK23,song23,PZV2019,ZPTGVF18,FZPV24}. The QSOT formalism has also been utilized in Ref.~\cite{FuPa22a} to formulate a quantum Bayes' rule, leading to a notion of operational reversibility for open quantum systems~\cite{FuPa24a}. Recently the QSOT formalism has also been utilized in the study of temporal entanglement \cite{milekhin2025observable} and the quantum transport problem \cite{hoogsteder2025approach}.

Although bipartite QSOTs have indeed provided a natural extension of the density operator formalism into the temporal domain, there are various issues in regards to unambiguously generalizing QSOTs to a multipartite setting~\cite{LeSp13}, leading some to believe that a well-defined multipartite construction of QSOTs is not possible~\cite{RCGWF18}. In this Letter, we take a cue from the derivation of special relativity, and derive an explicit expression for multi-partite QSOTs based on two simple assumptions. The first assumption is that a QSOT should depend linearly on the initial state of a dynamical process, in accordance not only with the convex structure of quantum state spaces, but also with the statistical behavior of dynamical quantities such as work and heat. The second and final assumption is motivated by the fact that equations of motion should be uniquely determined by initial conditions subject to dynamical laws, which we formulate in terms of a conditionability condition for QSOTs. 

While our assumptions are quite simple, we provide further justification for them by showing how the QSOTs derived from our assumptions yield a spatiotemporal generalization of a quantum Markov chain, a notion which was previously only defined in terms of a (spacelike) density operator \cite{accardi2020quantum, datta2015quantum, fawzi2015quantum, sutter2018approximate}. Moreover,  as a consequence of our results, we establish a canonical correspondence between QSOTs and multipartite Kirkwood-Dirac distributions associated with POVMs measured over time, which we show admits an operational interpretation in terms of a recently established technique for measuring quantum temporal correlations referred to as \emph{quantum snapshotting}~\cite{wang2024snapshotting}. Finally, we conclude by showing how the QSOTs singled out by our assumptions encode all relevant correlations associated with a Leggett-Garg scenario. Interestingly, we find that QSOTs yield a purely quantum description of Leggett-Garg scenarios without invoking any notion of state collapse.

\textit{Preliminaries}---
Given finite-dimensional quantum systems $A_0,\ldots,A_n$, the associated joint system with Hilbert space $\bigotimes_{i=0}^{n}\mathcal{H}_{A_i}$ will be denoted by $A_0\cdots A_n$. We will mathematically identify the system $A_i$ with the algebra of linear operators on $\mathcal{H}_{A_i}$, and a completely positive, trace-preserving linear map  $\E_{i}:A_{i-1}\to A_i$ will be referred to as a \emph{quantum channel}. An $n$-tuple $(\E_1,\ldots,\E_n)$ of quantum channels $\E_i:A_{i-1}\to A_i$ will be referred to as an \textit{$n$-chain} from $A_0$ to $A_n$, and a density operator $\rho$ of the initial system $A_0$ will be referred to as an \emph{initial state} for the $n$-chain $(\E_1,\ldots,\E_n)$. The truncations of an $n$-chain $\bm{\E}=(\E_1,\E_2,\dots,\E_n),$ from below and above will be denoted by
\begin{align*}
    \underline{\bm{\E}}:=(\E_2,\E_3,\dots,\E_n)\quad \text{and} \quad \overline{\bm{\E}}:=(\E_1,\E_2,\dots,\E_{n-1})\,.
\end{align*}

The following definition yields a quantum analog of the multipartite distribution associated with a sequence $X_0,\ldots,X_n$ of classical random variables.

\begin{definition}[Quantum state over time] \label{QSOTDFX67}
A \textit{$\star$-product} (or \textit{spatiotemporal product}) is a binary operation that maps every $n$-chain $\bm{\E}$ from $A_0$ to $A_n$ with initial state $\rho$ to an operator $\bm{\E}\star\rho$ on $\mathcal{H}_{A_0}\otimes \cdots \otimes \mathcal{H}_{A_n}$ such that
\begin{align*}
\Tr_{A_0}[\bm{\E}\star\rho]= \underline{\bm{\E}}\star\E_1(\rho) \quad\text{and}\quad
\Tr_{A_n}[\bm{\E}\star\rho] = \overline{\bm{\E}}\star\rho \, .
\end{align*}
In such a case, $\bm{\E}\star\rho$ is said to be the \emph{quantum state over time} associated with $n$-chain $\bm{\E}$ and the initial state $\rho$. 
\end{definition}
For $n=1$, it follows that $\overline{\bm{\E}}$ and $\underline{\bm{\E}}$ are simply the empty chain, so that in such a case we set $\underline{\bm{\E}}\star\E_1(\rho)=\E_1(\rho)$ and $\overline{\bm{\E}}\star\rho=\rho$ in Definition~\ref{QSOTDFX67}. The restriction of a $\star$-product to $n$-chains for a fixed $n$ will be referred to as a \emph{$\star$-product on $n$-chains}. While a $\star$-product on 1-chains was referred to as a ``QSOT function" in Refs.~\cite{FuPa22a,LiNg23,PFBC23}, we use the $\star$-product terminology to be more consistent with the notation $\bm{\E}\star \rho$. 

\textit{Bipartite QSOTs}---
 In their formulation of quantum theory as a causally neutral theory of Bayesian inference~\cite{LeSp13}, Leifer and Spekkens introduced the spatiotemporal product on $1$-chains $\star_{\text{LS}}$ given by 
 \[
 \E\star_{\text{LS}} \rho=(\sqrt{\rho}_A\otimes \mathds{1}_{B})\mathscr{J}[\E](\sqrt{\rho}_A\otimes \mathds{1}_{B})\, ,
 \]
 where $\mathscr{J}[\E]=\sum_{i,j}|i\rangle \langle j|_A \otimes \E(|j\rangle \langle i|)_B$ is the \emph{Jamio{\l}kowski matrix} of a quantum channel $\E:A\to B$ (which is not to be confused with the Choi matrix, which differs from the \jami matrix by a partial transpose). While the LS-product $\star_{\text{LS}}$ has appeared extensively in the literature on quantum Bayesian inference \cite{bai2024bayesian, bai2024quantum, aw2024role, AwBuSc21}, there are significant drawbacks with the construction. In particular, the LS-product is not linear in $\rho$, and perhaps most importantly, it does not extend in any natural way to multipartite temporal scenarios.

While there are indeed various state-linear $\star$-products on 1-chains which admit natural extensions to multipartite scenarios, such as the left and right products $\star_{\text{L}}$ and $\star_{\text{R}}$ given by $\E \star_{\text{L}} \rho = (\rho \otimes \mathds{1}_B)\mathscr{J}[\E]$ and $\E \star_{\text{R}} \rho = \mathscr{J}[\E] (\rho \otimes \mathds{1}_B)$, such constructions are based on mathematical simplicity, and are not derived from physical principles. It had then been an open problem to derive a $\star$-product from simple physical assumptions~\cite{HHPBS17}, and such a derivation was achieved in Refs.~\cite{LiNg23,PFBC23}. Moreover, while the sets of axioms employed in Refs.~\cite{LiNg23,PFBC23} were distinct, the resulting $\star$-product was the same. 

The $\star$-product on 1-chains uniquely derived from physical principles was referred to as the \textit{Fullwood-Parzygnat} (FP)-product in Ref. ~\cite{LiNg23}, which is given by
\begin{equation} \label{eqn:FPfunc}
    \E\star_{\text{FP}}\rho= \frac{1}{2}\Big\{\rho_A\otimes\mds{1}_B,\mathscr{J}[\E]\Big\}
\end{equation}
for all states $\rho$ on $A$ and quantum channels $\E:A\to B$, where $\{\cdot\, ,\cdot\}$ denotes the anti-commutator. While such bipartite constructions of QSOTs have proven useful for treating space and time on an equal footing, it turns out that there are a multitude of ways in which one may extend a $\star$-product on $1$-chains to $n$-chains~\cite{Fu23,Fu23a}. In fact, in the case of the FP-product, we show in Appendix~\ref{app:B} that the number of $\star$-products on $n$-chains which reduce to $\star_{\text{FP}}$ on 1-chains grows exponentially with $n$. This is highly undesirable: Imagine how confusing it could have been if there were $2^n$ different constructions of the density operator for $n$-partite systems. As such, it would be desirable if we could find simple assumptions which single out a unique extension of quantum states over time to the multipartite setting, without imposing one by fiat. 

\textit{Multipartite extension of QSOTs}---
The main result of this Letter is that two fundamental assumptions imply that $\star$-product is completely determined by its restriction to $1$-chains, thus yielding a unique multi-partite extension of bipartite QSOTs. Our first assumption is simply the requirement that a $\star$-product should be consistent with statistical mixtures: if there is uncertainty about the initial state preparation, this uncertainty should carry over to the overall dynamics to the same degree. Mathematically, this means that $\bm{\E}\star\rho$ is convex-linear in $\rho$, i.e., for every $n$-chain $\bm{\E}$, we assume $\bm{\E}\star (\sum_ip_i\rho^i)=\sum_ip_i (\bm{\E}\star \rho^i)$. This convex-linearity in density operator can be extended to linearity in general operator with the standard procedure \cite{wilde2013quantum, SM}, and thus will be referred to as \textit{state-linearity}. The second assumption is motivated from the fact that it is customary to describe the evolution of a physical system in terms of initial conditions subject to dynamical laws. In the context of classical stochastic dynamics associated with a with a sequence of random variables $X_0,\dots,X_n$, such a principle manifests itself at the level of the associated joint distribution $P(x_0,\ldots,x_n)$ via the equation
\begin{equation}\label{eqn:condition}
    P(x_0,\ldots,x_n)=P(x_0) \cdot P(x_1,\ldots,x_n|x_0)\, ,
\end{equation}
where $P(x_0)$ is viewed as the initial state of the process. We refer to this property of multipartite probability distributions as \emph{conditionability}, which we now generalize to the setting of QSOTs.

\begin{definition}[Conditionability] A $\star$-product is said to be \textit{conditionable} if and only if for every state $\rho$ on $A_0$, there exists a linear map $\Theta_\rho:A_0\to A_0$ such that for every $n$-chain $\bm{\E}$ from $A_0$ to $A_n$,
\begin{equation} \label{eqn:quantcond}
\bm{\E} \star \rho = (\Theta_\rho \otimes \id_{A_1\cdots A_n}) (\bm{\E} \star \mds{1}_{A_0})\, .
\end{equation}
\end{definition}

The appearance of the map $\Theta_{\rho}$ on the RHS of equation \eqref{eqn:quantcond} is an operator analog of being able to factor out the contribution of a prior $P(x_0)$ from a multipartite distribution as in \eqref{eqn:condition}, while the operator $\bm{\E} \star \mds{1}_{A_0}$ is a quantum analog of the conditional distribution $P(x_1,\ldots,x_n|x_0)$. 

Before stating our main result we set some notation in place. For a state-linear $\star$-product and an arbitrary quantum channel $\E$, $\E\,\star$ can be understood as a linear map $\rho \mapsto \E \,\star\, \rho$. Hence, given an $n$-chain $(\E_1,\ldots,\E_n)$, after omitting the tensor product with the identity channels, $\E_k\star ((\E_1\ldots,\E_{k-1})\star \rho)$ is well-defined for any $2\leq k \leq n$.

\begin{theorem}[Unique multi-partite extension of QSOTs] \label{thm:iterfrcond}
A $\star$-product which is state-linear and conditionable is uniquely determined by its restriction to 1-chains. In particular, for every $n$-chain $\bm{\E}=(\E_1,\ldots,\E_n)$ with initial state $\rho$,
\begin{equation} \label{eqn:iterativity}
\bm{\E}\star \rho=\E_n \star ( \E_{n-1} \star ( \cdots \star(\E_1 \star \rho)))\, .
\end{equation}
\end{theorem}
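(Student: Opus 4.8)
The plan is to argue by induction on the chain length $n$, the case $n=1$ being the defining restriction to $1$-chains. For the inductive step I would first use conditionability to strip off the initial state. Writing $C_n:=\bm{\E}\star\mds{1}_{A_0}$, conditionability supplies a single map $\Theta_\rho$ (the \emph{same} for every chain on $A_0$, since $\Theta_\rho$ is quantified before $\bm{\E}$) with $\bm{\E}\star\rho=(\Theta_\rho\otimes\id_{A_1\cdots A_n})C_n$ and, specialized to the $1$-chain $\E_1$, with $\E_1\star\rho=(\Theta_\rho\otimes\id_{A_1})(\E_1\star\mds{1}_{A_0})$. Because $\Theta_\rho$ acts only on $A_0$ it commutes with each outer factor $\E_k\star$ for $k\ge 2$, so by peeling off the innermost factor via the $1$-chain relation and commuting $\Theta_\rho$ outward, the proposed right-hand side also factors as $\E_n\star(\cdots\star(\E_1\star\rho))=(\Theta_\rho\otimes\id)\,\widetilde{C}_n$ with $\widetilde{C}_n:=\E_n\star(\cdots\star(\E_1\star\mds{1}_{A_0}))$. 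This reduces the theorem to the purely ``conditional'' identity $C_n=\widetilde{C}_n$.

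Next I would extract $C_n=\widetilde{C}_n$ from a single marginal condition together with the inductive hypothesis. Tracing out $A_0$ in Definition~\ref{QSOTDFX67} gives $\Tr_{A_0}[\bm{\E}\star\rho]=\underline{\bm{\E}}\star\E_1(\rho)$, and since $\underline{\bm{\E}}=(\E_2,\ldots,\E_n)$ is an $(n-1)$-chain the inductive hypothesis rewrites this as $\E_n\star(\cdots\star(\E_2\star\E_1(\rho)))=\Tr_{A_0}[\E_n\star(\cdots\star(\E_1\star\rho))]$, where I use that $\Tr_{A_0}$ commutes with the outer factors and that $\Tr_{A_0}[\E_1\star\rho]=\E_1(\rho)$ for a $1$-chain. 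Subtracting and inserting the two factorizations above yields $\Tr_{A_0}\big[(\Theta_\rho\otimes\id)(C_n-\widetilde{C}_n)\big]=0$ for every state $\rho$.

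The crux is to upgrade this family of contracted identities to the operator identity $C_n=\widetilde{C}_n$, and this is the step I expect to be the main obstacle, since matching only the $A_0$- and $A_n$-marginals is generically too weak to pin down a genuinely multipartite operator. The resolution I would pursue is a faithfulness property of the family $\{\Theta_\rho\}$: writing $X=\sum_{ab}|a\rangle\!\langle b|_{A_0}\otimes X_{ab}$, the vanishing of $\Tr_{A_0}[(\Theta_\rho\otimes\id)X]=\sum_{ab}\Tr[\Theta_\rho(|a\rangle\!\langle b|)]\,X_{ab}$ for all $\rho$ forces $X=0$ precisely when the functionals $\rho\mapsto\Tr[\Theta_\rho(|a\rangle\!\langle b|)]$ are linearly independent. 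To secure this I would feed the identity channel into the $1$-chain relation: the bottom-trace axiom gives $\Tr_{A_0}[(\Theta_\rho\otimes\id)(\id\star\mds{1}_{A_0})]=\rho$, exhibiting $\rho$ as a fixed linear combination of exactly these $d^2$ functionals, so the linear map $\rho\mapsto(\Tr[\Theta_\rho(|a\rangle\!\langle b|)])_{ab}$ admits a left inverse and is therefore injective; between spaces of equal dimension $d^2=(\dim A_0)^2$ this is equivalent to the sought linear independence. Applying faithfulness to $X=C_n-\widetilde{C}_n$ then gives $C_n=\widetilde{C}_n$, and combining with the reduction of the first paragraph closes the induction and establishes \eqref{eqn:iterativity}.
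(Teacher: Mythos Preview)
Your argument is correct, and the faithfulness step you isolate as the crux is precisely the content of the paper's key preliminary lemma, namely the invertibility of the Jamio{\l}kowski map $\Phi^{(\id_{A_0})}:=\mathscr{J}^{-1}[\id_{A_0}\star\mathds{1}]$; your left-inverse identity $\rho=\sum_{ab}\Tr[\Theta_\rho(E_{ab})]\,M_{ab}$ with $\id\star\mathds{1}=\sum_{ab}E_{ab}\otimes M_{ab}$ is exactly the computation the paper uses to prove that invertibility. The organization, however, is genuinely different. The paper factors through the \emph{broadcast} $\id_{A_0}\star\rho$ on the output side: it defines $\chi^{(\E_1,\dots,\E_n)}:=\Phi^{(\E_1,\dots,\E_n)}\circ(\Phi^{(\id_{A_0})})^{-1}$, shows $(\E_1,\dots,\E_n)\star\rho=(\id_{A_0}\otimes\chi^{(\E_1,\dots,\E_n)})(\id_{A_0}\star\rho)$, and then proves by induction that $\chi$ coincides with the explicit iterated map $\Psi^{(\E_1,\dots,\E_n)}=(\id\otimes\E_n^\star)\circ\cdots\circ\E_2^\star\circ\E_1$. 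To set this up it first passes through an auxiliary chain of implications (conditionability $\Rightarrow$ decomposability $\Rightarrow$ broadcasting). You instead factor through $\Theta_\rho$ on the input side and compare the conditional operators $C_n,\widetilde C_n$ directly; your commutation of $\Theta_\rho$ and of $\Tr_{A_0}$ with the outer $\E_k\star$ replaces the broadcasting lemma, which makes your route somewhat more economical. One small point to tighten: you label $\rho\mapsto(\Tr[\Theta_\rho(E_{ab})])_{ab}$ a linear map before justifying it; the clean fix is to observe that your identity $\rho=\sum_{ab}(v_\rho)_{ab}M_{ab}$ forces the $d^2$ operators $M_{ab}$ to span $A_0$ and hence to form a basis, so that $v_\rho$ equals the inverse of the linear map $w\mapsto\sum_{ab}w_{ab}M_{ab}$ evaluated at $\rho$ and is automatically linear---this is also how the paper deduces linearity of $\Theta_\rho$ in $\rho$.
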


The proof of Theorem~\ref{thm:iterfrcond} is given in the Supplementary Material \cite{SM}. Due to its structure, the canonical extension of the QSOT characterized in Theorem \ref{thm:iterfrcond} will be referred to as the \textit{Markovian extension}, the meaning of which will be elaborated upon later. (See FIG. \ref{fig:enter-label} (a)).
In what follows, we explore some further implications of Theorem~\ref{thm:iterfrcond}. 

\textit{Multipartite KD distributions via QSOTs}---
In quantum theory, incompatible observables cannot be measured simultaneously, as evidenced by the non-existence of well-defined joint probability distributions for their measurement outcomes. In 1933 and 1945 respectively, Kirkwood \cite{kirkwood1933quantum} and Dirac \cite{dirac1945analogy} independently proposed a way to circumvent this issue by allowing for complex-valued probabilities, which are now commonly referred to as \emph{quasi}probability distributions. This construction, known today as the \textit{Kirkwood-Dirac} (KD) distribution, plays a fundamental role in the study of non-classical features of quantum temporal correlations, especially in the context of quantum thermodynamics \cite{Lostaglio2023kirkwooddirac, arvidsson2024properties}. In this section, we establish a precise correspondence between $\star$-products and classes of quasiprobability distributions, and then use Theorem \ref{thm:iterfrcond} to uniquely extend the latter to the multipartite setting.

Let $\{M_i\}\subset A$ and $\{N_j\}\subset B$ be POVMs which are to be measured at times $t_1$ and $t_2>t_1$. It then follows that for every $\star$-product on 1-chains the elements $Q_{AB}(i,j)\in \mathbb{C}$ given by
\begin{equation} \label{QPXDSX67}
Q_{AB}(i,j)=\Tr\Big[\E\star \rho\, (M_i\otimes N_j)\Big]
\end{equation}
form a quasiprobability distribution for every channel $\E:A\to B$ with initial state $\rho$ on $A$. The marginals $Q_A(i)=\sum_j Q_{AB}(i,j)$ and $Q_B(j)=\sum_i Q_{AB}(i,j)$ are probability distributions given by $Q_A(i)=\Tr[\rho \hspace{0.2mm} M_i] \quad \text{and} \quad Q_B(j)=\Tr[\E(\rho) N_j]\,$, which are the standard probabilities associated with the Born rule. It turns out that Eq.~\eqref{QPXDSX67} can be used to establish an explicit correspondence between bipartite QSOTs and bipartite quasiprobability distributions, and this correspondence can be easily generalized to the $n$-step setting (see Appendix \ref{app:A} for more detail). We note that under such a correspondence, the state-linear $\star$-products $\star_{\text{L}}$ and $\star_{\text{FP}}$ correspond to the Kirkwood-Dirac and Margenau-Hill distributions, respectively.

 If a $\star$-product is state-linear, then it follows from the no-go theorem for quasiprobability distributions appearing in Ref.~\cite{Lostaglio2023kirkwooddirac} that the non-positivity of generic $Q_{AB}(i,j)$ as given by \eqref{QPXDSX67} is inevitable. This is a reflection of the fact that the QSOT $\E\star \rho$ will generally admit negative eigenvalues, thus the non-positivity of $\E\star \rho$ is an indicator of non-classicality for the temporal correlations established by $\E$ and $\rho$. 

An immediate implication of Theorem~\ref{thm:iterfrcond} is that quasiprobability distributions as given by \eqref{QPXDSX67} canonically extend to an arbitrary number of POVMs measured over time, provided that the associated $\star$-product is conditionable and state-linear \footnote{In particular, we note that Theorem~\ref{thm:iterfrcond} does not provide a unique extension of the Leifer-Spekkens distribution since $\star_{\text{LS}}$ is not state-linear}. We also note that while previous works have addressed the issue of extending quasiprobability distributions to multipartite scenarios~\cite{yunger2018quasiprobability, arvidsson2020quantum}, they were primarily formal and not derived from basic assumptions. 

The fact that QSOTs encode quasiprobability distributions implies that the many uses of the Kirkwood-Dirac and Margenau-Hill distributions in quantum chaos, quantum thermodynamics, and quantum metrology \cite{arvidsson2024properties} carry over directly to the QSOT. Furthermore, Theorem~\ref{thm:iterfrcond} provides a firm mathematical foundation for studies utilizing certain multipartite KD distributions. For example, in post-selected quantum metrology \cite{arvidsson2020quantum}, advantages beyond the Heisenberg limit arise only when the tripartite KD distribution associated with the scheme exhibits negativity. Finding necessary conditions for such advantages remains an open problem, and we expect that operator properties of QSOTs (such as their eigenvalues) will provide novel insights in this direction.

\begin{figure}
    \centering
    \includegraphics[width=\linewidth]{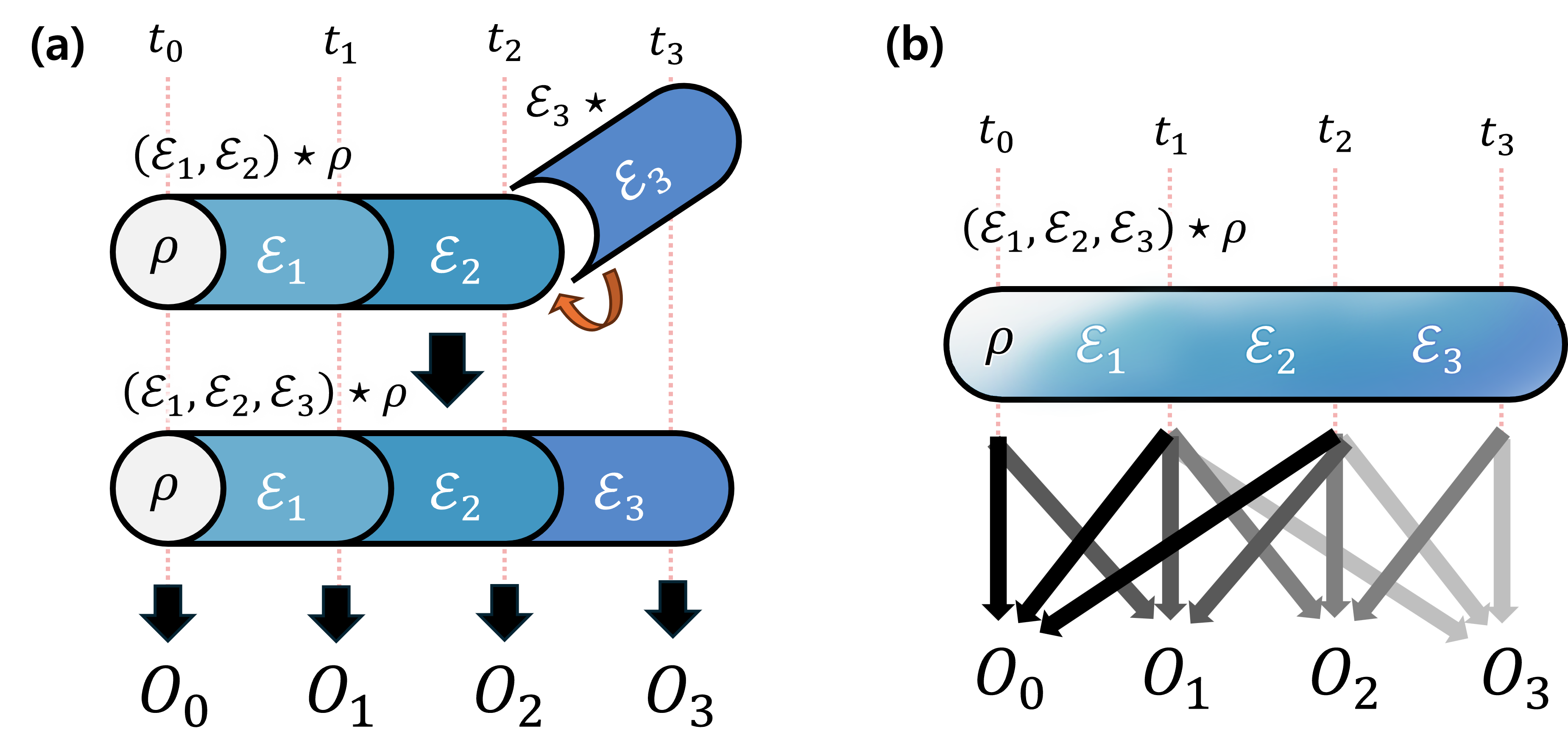}
    \caption{(a) The Markovian extension of a QSOT uniquely characterized in Theorem \ref{thm:iterfrcond} allows for sampling observables $O_i$ at time $t_i$ through temporally localized interventions, i.e., quantum snapshotting. (b) Contrarily, non-Markovian QSOTs do not have such a simple decomposition and sampling an observable at each time requires global access to multiple time-steps.}
    \label{fig:enter-label}
\end{figure}

\textit{Markovianity and quantum snapshotting}---
In classical probability, a sequence of random variables $X_0,\ldots,X_n$ is said to form a \emph{Markov chain} if the associated conditional distribution $P(x_i|x_{i-1},\ldots,x_0)$ satisfies
\begin{equation}\label{MKVCXDN87}
P(x_i|x_{i-1},\ldots,x_0)=P(x_i|x_{i-1}) \qquad  1<i\leq n\, .
\end{equation}
For a quantum system evolving under an $n$-chain $(\E_1,\ldots,\E_n)$, the associated quasiprobabilities as established in the previous section do not satisfy the Markovian condition \eqref{MKVCXDN87} (see Appendix \ref{app:NonMarkov} for an explicit example). Therefore, at the level of quasiprobabilities the Markovianity condition in the classical sense fails, reflecting the fact that a measurement at time $t_i$ may have causal influence on measurements at times $t>t_{i+1}$. 

On the other hand, the formula
\[
(\E_1, \E_2)\star \rho = \E_2 \star (\E_1 \star \rho)
\]
as given by Eq.~\eqref{eqn:iterativity} of Theorem~\ref{thm:iterfrcond} in the case of 2-chains is more reminiscent of the classical expression $P(x_0,x_1,x_2)=P(x_2|x_1)P(x_1|x_0)P(x_0)$ associated with a Markov chain $X_0\to X_1\to X_2$. Moreover, we can see that the 3-time QSOT $(\E_1, \E_2) \star \rho$ is obtained from the 2-time QSOT $\E_1 \star \rho$ by the mapping $\E_2\,\star$ acting on $A_1$ alone, which is consistent with the definition of a quantum Markov chain of Refs. \cite{accardi2020quantum, datta2015quantum, fawzi2015quantum, sutter2018approximate} for spatial correlations. Such an observation suggests that in the temporal domain, quantum Markovianity is a property naturally formulated at the level of QSOTs rather than at the level of quasiprobabilities, which justifies referring to Eq.~\eqref{eqn:iterativity} of Theorem~\ref{thm:iterfrcond} as the Markovian extension of bipartite QSOTs.
As such, an immediate corollary of Theorem \ref{thm:iterfrcond} is that one can formulate quantum Markov chains directly at the level of the quantum state, covering both spatial and temporal cases, contrary to higher-order constructions such as process matrices \cite{OrCe15, OrCe16}, process tensors \cite{pollock2018operational, pollock2018non}, or quantum combs \cite{chiribella2008quantum}.

The Markovian nature of multipartite QSOTs is also crucial for the possibility of measuring the expectation values encoded by a QSOT by \emph{quantum snapshotting}, which is a recently proposed technique for experimental verification of quasiprobabilities \cite{wang2024snapshotting}. In particular, consider the expectation value $\langle \mathscr{O}_1,\ldots,\mathscr{O}_n \rangle$ of observables $\mathscr{O}_i$ associated with the $i$-th step in an $n$-chain QSOT $(\E_1,\E_2,\dots,\E_n)\star\rho$, calculated as $\Tr[(\mathscr{O}_0\otimes\cdots\otimes \mathscr{O}_n)(\E_1,\E_2,\dots,\E_n)\star\rho]$. As the associated QSOT admits the iterative decomposition as in \eqref{eqn:iterativity}, we can express $\langle \mathscr{O}_0,\dots,\mathscr{O}_n\rangle$ as
\begin{equation} \label{eqn:snapshot}
    \Tr[\mathscr{O}_n (\E_n \circ \M_{\mathscr{O}_{n-1}} \circ \cdots \circ \E_1 \circ \M_{\mathscr{O}_0})(\rho)]\,,
\end{equation}
where $\M_\mathscr{O}=\sum_i \gamma_i(\mathscr{O}) \mcal{I}_i$ is a linear map representing the sampling process of the observable $\mathscr{O}$, which can be expressed as a linear combination of physically implementable quantum instruments $\qty{\mcal{I}_i}$ such that $\sum_i\mcal{I}_i$ is a quantum channel, with complex coefficients $\gamma_i(\mathscr{O})$ processed classically~\cite{wang2024snapshotting} (see Appendix \ref{app:Qsnap} for further details).

Notice that, to obtain the expectation value \eqref{eqn:snapshot}, one can apply the channels $\E_i$ and (the physical implementation of) the snapshotting processes $\mcal{M}_{\mathscr{O}_i}$ in an alternating manner, due to the fact that the action of $\mcal{M}_{\mathscr{O}_i}$ is localized at the $i$-th time-step in \eqref{eqn:snapshot}. This indeed demonstrates the significance of the Markovian extension of QSOTs, as it allows for `snapshotting' in the literal sense: One does not need to retain the memory of all other time-steps to extract information of an observable at time-step $i$. By faithfully capturing Markovianity, such QSOT expansions thus enable feasible experimental verification via quantum snapshotting. (See FIG. \ref{fig:enter-label}.)

\textit{A Leggett-Garg scenario}---Let $\star$ be the Markovian tripartite extension of $\star_{\text{FP}}$, and consider an experiment in which a qubit in state $\rho$ is to be sent through 3 devices $Q_1$, $Q_2$ and $Q_3$ in sequence, where $Q_i$ measures the spin observable $\mathscr{O}_i=\vec{u}_i\cdot \sigma$ with $\vec{u}_1=(1,0,0)$, $\vec{u}_2=(1/2,\sqrt{3}/2,0)$ and $\vec{u}_3=(-1/2,\sqrt{3}/2,0)$. Denote the projectors onto the $\pm 1$-eigenspaces of $\mathscr{O}_i$ by $\Pi_i^{\pm}$, so that $\mathscr{O}_i=\Pi_i^{+}-\Pi_i^{-}$ and $\Pi_i^{+}+\Pi_i^{-}=\mathds{1}$. We assume that for each run of the experiment, exactly one of the devices $Q_k$ is deactivated (alternating between $k=1,2,3$), so that each run of the experiment consists of two measurements $\mathscr{O}_i$ followed by $\mathscr{O}_j$ with $i<j$. We further assume that the qubit evolves trivially as it travels from device $Q_i$ to $Q_{i+1}$, so that the QSOT associated with such a three-time scenario is $\textbf{id}^2\star \rho$, where $\textbf{id}^2:=(\id,\id)$ is the identity 2-chain. Now let $C_{ij}$ be the two-time expectation value associated with runs of the experiment where a measurement of $\mathscr{O}_i$ is followed by a measurement of $\mathscr{O}_j$, so that 
\[
C_{ij}=P_{ij}(1,1)-P_{ij}(1,-1)-P_{ij}(-1,1)+P_{ij}(1,1)\, ,
\] 
where $P_{ij}(\pm 1\,,\pm 1)=\Tr[\Pi_i^{\pm}\rho \,\Pi_i^{\pm}\Pi_j^{\pm}]$. If the values of the observables $\mathscr{O}_1$, $\mathscr{O}_2$ and $\mathscr{O}_3$ are each well-defined for each run of the experiment, regardless of which device $Q_k$ is deactivated, then such a measurement scenario is consistent with the principle of \emph{macrorealism}, which essentially means that the value of an observable is well-defined irrespective of whether or not it was actually measured. If such a measurement scenario is indeed consistent with the principle of macrorealism, and moreover, if it is possible to measure the values of $\mathscr{O}_i$ in a non-invasive manner, then it is straightforward to show 
\begin{equation} \label{LGEQX17}
C_{12}+C_{23}-C_{13}\leq 1\, ,
\end{equation}
which is often referred to as a \emph{Leggett-Garg inequality}~\cite{LeGa85}. However, in such a case we have $C_{12}=C_{23}=-C_{13}=1/2$ (a result which is actually independent of the state $\rho$~\cite{Emary_2013,FV25}), so that $C_{12}+C_{23}-C_{13}=3/2>1$,
thus violating the Leggett-Garg inequality \eqref{LGEQX17}. It then follows that either the assumption of macrorealism or non-invasive measurability must be invalid (or both), implying that there does not exist a well-defined tripartite probability distribution $P(x_1,x_2,x_3)$ associated with the measurement outcomes of each run of the experiment whose marginals are the distributions $P_{ij}(x_i,x_j)$. On the other hand, by direct calculation we find 
\begin{equation} \label{EXPTXTN67}
C_{ij}=\Tr\Big[\textbf{id}^2\star \rho\, (\mathscr{O}_i\otimes \mathscr{O}_j\otimes \mathds{1})\Big]\, ,
\end{equation}
thus, the tripartite QSOT $\textbf{id}^2\star \rho$ encodes the correlations $C_{ij}$ for all $i$ and $j$, despite the absence of a well-defined tripartite probability distribution associated with each run of the experiment. 

Therefore, while such a Leggett-Garg scenario is inconsistent with macrorealism and non-invasive measurability, the QSOT $\textbf{id}^2\star \rho$ nevertheless is a quantum Markov chain encoding all of the relevant correlations of the experiment, independent of which measurements were actually performed. This demonstrates that such a violation of the Leggett-Garg inequality cannot be attributed to a lack of (quantum) Markovianity. Interestingly, the quasiprobabilities $Q_{ij}(x_i,x_j)$ associated with the QSOT $\textbf{id}^2\star \rho$ do not coincide with the probabilities $P_{ij}(x_i,x_j)$ coming from the state-update rule, yet the expectation values corresponding to the weighted sums of $P_{ij}(x_i,x_j)$ and $Q_{ij}(x_i,x_j)$ turn out to be the same. As such, taking \eqref{EXPTXTN67} as a theoretical definition for temporal expectation values yields a spatiotemporal description of Leggett-Garg scenarios without any reference to state collapse. 

\textit{Discussion}---
In this Letter, we have shown that multipartite QSOTs may be uniquely derived from the assumptions of linear dependence on the initial state and quantum conditionability. Such a result not only lays the foundation for a spatiotemporal extension of the density operator formalism, but also for formulating a relativistic theory of quantum information where space and time are treated on equal footing. In particular, the multipartite QSOTs singled out by our assumptions yield a dynamical notion of quantum Markov chain which naturally extends the definition for spacelike separated systems. Moreover, we have also shown how QSOTs encode quasiprobability distributions associated with measurements of POVMs over time. As quasiprobability distributions have emerged as a key tool for characterizing non-classical features of quantum temporal correlations---particularly in fields such as quantum thermodynamics, quantum chaos and metrology~\cite{Lostaglio2023kirkwooddirac}---we expect that the QSOTs derived in this Letter will also come to play a significant role in these developments. Finally, it would be interesting to see if the assumptions employed in this Letter may also be used to derive QSOTs associated with more exotic temporal scenarios, such as non-Markovian dynamics and indefinite causal structures~\cite{Shrikant_2023,OCB12}.

\acknowledgments{SHL is supported by the 2025 Research Fund (1.250007.01) of Ulsan National Institute of Science \& Technology (UNIST) and Institute of Information \& Communications Technology Planning \& Evaluation (IITP) Grant (RS-2023-00227854, RS-2025-02283189, ``Quantum Metrology Theory Based on Temporal Correlation Operators"). JF is supported by the start-up grant of Hainan University for the project ``Spacetime from quantum information".}

\bibliography{main}

\begin{appendix}
\onecolumngrid
\section*{End Matter}
\twocolumngrid
\section{Holistic spatiotemporal products} \label{app:B}
Given $\bm{C}=(C_1,C_2,\dots,C_n)\in \qty{L,R}^n$, we let $\star_{\bm{C}}$ be the spatiotemporal product given by
\begin{equation}
    \bm{\E}\star_{\bm{C}}\rho = \E_n\star_{C_{n}} ( \E_{n-1} \star_{C_{n-1}} (\cdots (\E_1 \star_{C_1} \rho)))\; ,
\end{equation}
so that the bit string $\bm{C}=(C_1,C_2,\dots,C_n)\in \qty{L,R}^n$ determines which spatiotemporal product out of $\star_L$ and $\star_R$ is used at each time-step. Next, taking $\bm{\tilde{C}}=(\tilde{C_1},\tilde{C_2},\dots,\tilde{C_n})$ to be the complementary bit-string to $\bm{C}$, so that $\tilde{C_i}$ is $L$ when $C_i$ is $R$ and vice versa for each $i=1,\dots,n$, we then define a $\star$-product given by 
\begin{equation}
    \bm{\E}\star \rho = \frac{1}{2} \qty(\bm{\E}\star_{\bm{C}}\rho+\bm{\E}\star_{\bm{\tilde{C}}}\rho)\, .
\end{equation}
One can easily check that $\star$ is linear in $\rho$ and reduces to the FP-product $\star_{\text{FP}}$ on 1-chains. Since there are $2^{n-1}$ different such constructions for every $n\geq 1$, the number of $\star$-products on $n$-chains that reduce to $\star_{\text{FP}}$ on 1-chains increases exponentially with $n$. Such an observation highlights the significance of Theorem~\ref{thm:iterfrcond}, which singles out a unique multipartite extension of a $\star$-product on $1$-chains from such a large number of possibilities.

    \section{Equivalence between QSOT and quasiprobability distributions} \label{app:A}
In this section, we establish the one-to-one correspondence between multipartite QSOTs and families of quasiprobability distributions. For the ease of calculation, we will primarily use the expectation value calculated with a quasiprobability distribution, which is also known as the two-time correlation function (or the $n$-time generalization) in the field of open quantum systems \cite{breuer2002theory}, instead of the distribution itself. We start by giving a minimal definition of quasiprobability expectation $\langle \mathscr{O}_0,\mathscr{O}_1,\dots,\mathscr{O}_n\rangle_{(\bm{\E},\rho)}$ of $n+1$ observables $\mathscr{O}_0,\mathscr{O}_1,\dots,\mathscr{O}_n$ for a given $n$-chain $\bm{\E}=(\E_1,\dots,\E_n)$ and an input state $\rho$ of $A_0$. We require two properties: (i) it is linear in each observable $\mathscr{O}_i$ and (ii) it is normalized in the sense that $\langle \mds{1}_{A_0}, \mathscr{O}_1, \dots \mathscr{O}_n\rangle_{(\bm{\E},\rho)} = \langle \mathscr{O}_1,\dots, \mathscr{O}_n\rangle_{(\underline{\bm{\E}},\E_1(\rho))}$, $\langle \mathscr{O}_0,\dots,\mathscr{O}_{n-1},\mds{1}_{A_n}\rangle_{(\bm{\E},\rho)}=\langle \mathscr{O}_0,\dots, \mathscr{O}_{n-1}\rangle_{(\overline{\bm{\E}},\rho)}$ for every $n\geq1$ and  $\langle \mathscr{O}\rangle_{(\emptyset,\rho)}=\Tr[\mathscr{O} \rho]$ for every $\mathscr{O}$ and $\rho$. One recovers the quasiprobability distribution by substituting each observable $\mathscr{O}_i$ with a projector of a projective measurement (or a POVM element). In other words, for any POVM $\{M_x^{(i)}\}$ on $A_i$, the quasiprobability distribution $p_{x_0,\dots,x_n}$ is defined as $p_{x_0,\dots,x_n}:=\langle M_{x_0}, \dots, M_{x_n}\rangle_{(\bm{\E},\rho)}$ and we can see that it satisfies $\sum_{x_0}p_{x_0,\dots,x_n}=p_{x_1,\dots,x_n}$ and $\sum_{x_n} p_{x_0,\dots,x_n}=p_{x_0,\dots,x_{n-1}}$.

It turns out that for every quasiprobability expectation value, there exists a unique operator $\bm{\E}\ast\rho$ on $A_0\dots A_n$ that allows for the following expression for every $\mathscr{O}_i$. One can derive it from the universal property of the tensor product and the Riesz representation theorem, multilinearity of $\langle \mathscr{O}_0,\dots,\mathscr{O}_n\rangle$.
\begin{equation} \label{eqn:O0OnErho}
    \langle \mathscr{O}_0,\dots,\mathscr{O}_n\rangle_{(\bm{\E},\rho)}=\Tr[(\mathscr{O}_0\otimes\dots\otimes \mathscr{O}_n)(\bm{\E}\ast\rho)].
\end{equation}
One can check that $\Tr_{A_0}\bm{\E}\ast\rho = \underline{\bm{\E}}\ast\E_1(\rho)$ and $\Tr_{A_n}\bm{\E}\ast\rho=\overline{\bm{\E}}\ast\rho$. Therefore, $\ast$ qualifies as a spatiotemporal product. Conversely, \eqref{eqn:O0OnErho} is satisfied by an arbitrary spatiotemporal product $\star$. Hence, we can conclude that there exists a 1-1 correspondence between quasiprobability distributions and spatiotemporal products because the QSOT is the operator representation of a given quasiprobability distribution.

\section{Non-Markovianity of quasiprobability distribution} \label{app:NonMarkov}
For example, let $A,B$ and $C$ denote qubit systems at times $t_A<t_B<t_C$, and consider $\textbf{id}^2\star \pi_A:=(\id_{B|A},\id_{C|B})\star \pi_{A}$, where $\pi_{A}$ is maximally mixed and $\star$ denotes the Markovian extension of the FP-product to 2-chains. The tripartite quasiprobability distribution for measurements with respect to orthonormal bases $\qty{\ket{a_i}}$, $\qty{\ket{b_j}}$ and $\qty{\ket{c_k}}$ of systems $A$, $B$ and $C$ associated with $\textbf{id}^2\star \pi_A$ is the Markovian extension of the bipartite MH distribution, and the associated conditional distribution $Q_{C|AB}(k|i,j):=Q_{ABC}(i,j,k)/Q_{AB}(i,j)$ is given by
\begin{equation}
    Q_{C|AB}(k|i,j)=\Re\qty[\frac{\braket{c_k}{b_j}\braket{a_i}{c_k}}{\braket{a_i}{b_j}}]\, .
\end{equation}
However, a direct calculation yields $Q_{C|B}(k|j)=|\braket{b_j}{c_k}|^2\neq Q_{C|AB}(k|i,j)$ in general.

\section{Derivation of Eq. \eqref{eqn:snapshot}} \label{app:Qsnap}

In this section, we derive the expression \eqref{eqn:snapshot} for the expectation value $\langle \mathscr{O}_0,\ldots,\mathscr{O}_n \rangle$ of observables $\mathscr{O}_0,\dots,\mathscr{O}_n$ at times $t_0,\dots,t_n$ for the class of multipartite QSOTs characterized in Theorem \ref{thm:iterfrcond}. First, we observe that a conditionable $\star$-product is also decomposable which means that for arbitrary quantum channel $\E:A \to B$, $\E\star \rho$ decomposes into the form of $\E\star\rho=(\Theta_\rho\otimes \E)(\swap)$ where $\swap$ is the swap operator on two copies of system $A$, $A$ and $A'$, with a function $\Theta_\rho$ linear in $\rho$ (which can be different from the state rendering function in \eqref{eqn:quantcond} up to an invertible linear map. See Lemma 4 of Supplementary Material \cite{SM}.) Then it follows that $\Tr_A [\mathscr{O} (\E\star \rho)]=\Tr[(\mathscr{O}\otimes\mds{1})(\id_A\otimes \E\circ \Theta^\ddag_\rho)(\swap)]=\E\circ \Theta^\ddag_\rho (\mathscr{O})$ where $\Theta^\ddag_\rho$ is the unique linear map such that $(\Theta_\rho\otimes\id_{A'})(\swap)=(\id_A\otimes \Theta^\ddag_\rho)(\swap)$. (See Lemma 3 of Supplementary Material \cite{SM}.) By state-linearity of the $\star$-product, the mapping $\rho\mapsto \Theta^\ddag_\rho(\mathscr{O})$ is a linear map, so we define a linear map $\mcal{M}_{\mathscr{O}}(\rho):=\Theta^\ddag_\rho(\mathscr{O}).$

Next, note that the mapping $\varrho \mapsto \Tr[(\mathscr{O}_0\otimes \dots \otimes \mathscr{O}_n)\varrho]$ decomposes into the tensor product of functionals $\mcal{F}_0\otimes \mcal{F}_1\otimes \cdots \otimes \mcal{F}_n$, where $\mcal{F}_i(\sigma):=\Tr_{A_i}[\mathscr{O}_i\sigma]$ is a functional define on $A_i$, because the total trace decomposes into the partial traces as $\Tr=\Tr_{A_0}\otimes\cdots\otimes\Tr_{A_n}$. Now, for a given $n$-chain $\bm{\E}=(\E_1,\E_2,\dots,\E_n)$, if $\E\star \rho$ decomposes into $(\E_n^\star\circ\cdots\circ\E_1^\star)(\rho)$ where $\E^\star_i$ is the linear map $\varrho \mapsto \E\star\varrho$, then it follows that $\Tr[(\mathscr{O}_0\otimes \dots \otimes \mathscr{O}_n)(\bm{\E}\star\rho)]=(\mcal{F}_0\otimes\cdots\otimes\mcal{F}_n)(\E^\star_n\circ\cdots\circ\E^\star_1(\rho))$. Observe that this equals to
\[
\mcal{F}_n\circ(\mcal{F}_{n-1}\circ\E_n^\star)\circ\cdots\circ(\mcal{F}_0\circ\E_1^\star)(\rho)\, ,
\]
since $\mcal{F}_i$ commutes with $\E^\star_j$ with $j>i+1$ as the former only acts on $A_i$, while the latter acts on $A_k$s with $k>i$. Due to the argument given in the previous paragraph, we have $\mcal{F}_i\circ \E^\star_{i+1}=\E_{i+1}\circ \mcal{M}_{\mathscr{O}_i}$ for $i=0,\dots,n-1$. Hence, we can conclude that the expectation value $\langle \mathscr{O}_1,\ldots,\mathscr{O}_n \rangle$ coincides with 
\[
\Tr[\mathscr{O}_n (\E_n \circ \M_{\mathscr{O}_{n-1}} \circ \cdots \circ \E_1 \circ \M_{\mathscr{O}_0})(\rho)]\, ,
\]
as claimed in the main text~\eqref{eqn:snapshot}.

\end{appendix}

\clearpage
\onecolumngrid
\begin{center}
    {\large \textbf{Supplemental Material for \\
    ``Multipartite quantum states over time from two fundamental assumptions"}}
\end{center} 
\vspace{1em}

We now prove our main theorem, namely, that a spatiotemporal product which is state-linear and conditionable is necessarily given by the iterative product. First, we set some notation and conventions. 

Given a finite-dimensional quantum system $A$, for ease of notation we will denote the algebra of linear operators on the associated Hilbert space $\mathcal{H}_A$ also by $A$. The set of states on $A$ consisting of all unit trace positive operators on $\mathcal{H}_A$ will be denoted by $\mathfrak{S}(A)$, and we note that every $a\in A$ may be written as a linear combination $a=\sum_i\mu_i \rho^i$ with $\mu_i\in \mathbb{C}$ and $\rho^i\in \mathfrak{S}(A)$, which will be referred to as a \emph{state decomposition} of $a$. We note that a linear map $\E:\mathfrak{S}(A)\to B$ uniquely extends to a linear map $\widetilde{\E}:A\to B$. Moreover, given finite-dimensional quantum systems $A_0,\ldots,A_n$, the tensor product $A_0\otimes \cdots \otimes A_n$ of the associated algebras will be denoted by $A_0\cdots A_n$. The set of linear maps from $A$ to $B$ will be denoted by $\textbf{Lin}(A,B)$. We recall that the \emph{\jami isomorphism} is the linear map $\mathscr{J}:\textbf{Lin}(A,B)\to AB$ given by
\[
\mathscr{J}[\E]=(\id_A\otimes \E)(\swap)\, ,
\] 
where $\swap=\sum_{i,j}E_{ij}\otimes E_{ji}$ is the swap operator, $E_{ij}=|i\rangle \langle j|$ and $\{|i\rangle\}$ is an orthonormal basis of $\mathcal{H}_A$ (the formula for the swap operator  is independent of a choice of basis). It then follows that for every bipartite operator $M_{AB}\in AB$, the linear map $\mathscr{J}^{-1}[M_{AB}]:A\to B$ is given by
\begin{equation}
    \mathscr{J}^{-1}[M_{AB}](\rho)=\Tr_A[(\rho\otimes\mds{1})M_{AB}]\, .
\end{equation}
We recall that a linear map $\mathfrak{B}:A_0\to A_0A_1$ (with $A_0=A_1$) is said to be a \emph{broadcasting map} if and only if for all $\rho\in \mathfrak{S}(A_0)$,
\[
\Tr_{A_0}(\mathfrak{B}(\rho))=\Tr_{A_1}(\mathfrak{B}(\rho))=\rho\, .
\]

\section{Some preliminary results}

In this section, we prove some preliminary results which are necessary for our proof of the main theorem. We will denote the set of all $n$-chains $(\E_1,\ldots,\E_n)$ from $A_0$ through $A_n$ by $\bold{CPTP}(A_0,\ldots,A_n)$

\begin{definition}
Given an $n$-chain $(\E_1,\ldots,\E_n)\in \bold{CPTP}(A_0,\ldots,A_n)$ and a spatiotemporal product $\star$, we let 
\[
\Phi^{(\E_1,\ldots\E_n)}:A_0\longrightarrow A_1\cdots A_n
\] 
be the mapping given by $\Phi^{(\E_1,\ldots\E_n)}=\mathscr{J}^{-1}[(\E_1,\ldots,\E_n)\star \mathds{1}]$, where $\mathscr{J}:\textbf{Lin}(A_0\, ,A_1\cdots A_n)\to A_0\cdots A_n$ is the \jami isomorphism. The map $\Phi^{(\E_1,\ldots\E_n)}$ will then be referred to as the \emph{\jami map} associated with the $n$-chain $(\E_1,\ldots,\E_n)\in \bold{CPTP}(A_0,\ldots,A_n)$ and $\star$.
\end{definition}

\begin{remark}
We note that it follows from the definition of the \jami map $\Phi^{(\E_1,\ldots\E_n)}$ that
\begin{equation} \label{SWXPJSX67}
\Big(\id_{A_0}\otimes \Phi^{(\E_1,\ldots\E_n)}\Big)(\swap)=(\E_1,\ldots,\E_n)\star \mathds{1}\, .
\end{equation}
\end{remark}

Given a spatiotemporal product $\star$ and a quantum system $A_0$, we let $\mathfrak{B}^{\star}:A_0\to A_0A_1$ be the unique broadcasting map corresponding to the linear extension of the mapping defined on states by $\rho\mapsto \id_{A_0}\star \rho$, where $A_1$ is a copy of $A_0$.

\begin{lemma}
Let $\star$ be a spatiotemporal product which is conditionable, let $A_0$ be a quantum system, let $\Theta_{\rho}:A_0\to A_0$ be the associated state-rendering map for all $\rho\in \mathfrak{S}(A_0)$, and let $\mathfrak{B}^{\star}:A_0\to A_0A_1$ be the associated broadcasting map. Then for all $a\in A_0$, there exists a map $\Theta_{a}:A_0\to A_0$ such that
\begin{equation}\label{SFX359}
\mathfrak{B}^{\star}(a)=(\Theta_{a}\otimes \id_{A_1})(\id_{A_0}\star \mathds{1})\, .
\end{equation}
Moreover, if $a=\sum_i\mu_i\rho^i$ is a state decomposition of $a$, then $\Theta_a=\sum_i\mu_i \Theta_{\rho^i}$.
\end{lemma}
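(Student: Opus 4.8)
The plan is to build $\Theta_a$ by extending the state-rendering assignment $\rho \mapsto \Theta_\rho$ from states to all of $A_0$ by linearity, and to verify \eqref{SFX359} by transporting the per-state conditionability relation through the linearity of $\mathfrak{B}^\star$. First I would specialize the conditionability condition \eqref{eqn:quantcond} to the $1$-chain $\bm{\E}=(\id_{A_0})$, where $A_1$ is a copy of $A_0$. Because $\mathfrak{B}^\star$ is by definition the linear extension of $\rho\mapsto\id_{A_0}\star\rho$, this yields for every $\rho\in\mathfrak{S}(A_0)$ the identity $\mathfrak{B}^\star(\rho)=\id_{A_0}\star\rho=(\Theta_\rho\otimes\id_{A_1})(\id_{A_0}\star\mds{1})$, i.e. \eqref{SFX359} restricted to states, and in particular $\id_{A_0}\star\mds{1}=\mathfrak{B}^\star(\mds{1})$.

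Next, for an arbitrary $a\in A_0$ with state decomposition $a=\sum_i\mu_i\rho^i$, I would set $\Theta_a:=\sum_i\mu_i\Theta_{\rho^i}$ and compute
\[
\mathfrak{B}^\star(a)=\sum_i\mu_i\,\mathfrak{B}^\star(\rho^i)=\sum_i\mu_i(\Theta_{\rho^i}\otimes\id_{A_1})(\id_{A_0}\star\mds{1})=(\Theta_a\otimes\id_{A_1})(\id_{A_0}\star\mds{1})\, ,
\]
where the first equality uses linearity of $\mathfrak{B}^\star$, the second the per-state identity above, and the third the bilinearity of $(\Theta,M)\mapsto(\Theta\otimes\id_{A_1})M$. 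This settles both the existence claim and the explicit formula of the ``moreover'' clause.

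The step I expect to require the most care is \emph{well-definedness}: the right-hand side of \eqref{SFX359} equals $\mathfrak{B}^\star(a)$ and is thus independent of the chosen decomposition of $a$, so to regard $\Theta_a$ as canonically attached to $a$ one must check that $\sum_i\mu_i\Theta_{\rho^i}$ is itself decomposition-independent (equivalently, that $\rho\mapsto\Theta_\rho$ is the restriction of a genuine linear map). This reduces to injectivity of $L:\Theta\mapsto(\Theta\otimes\id_{A_1})(\id_{A_0}\star\mds{1})$. Writing $\id_{A_0}\star\mds{1}=(\id_{A_0}\otimes\Phi)(\swap)$ with $\Phi:=\mathscr{J}^{-1}[\id_{A_0}\star\mds{1}]=\Phi^{(\id_{A_0})}$ as in \eqref{SWXPJSX67}, one has $L(\Theta)=(\Theta\otimes\Phi)(\swap)$, and the Jamio{\l}kowski isomorphism identifies $L$, up to a fixed automorphism of $\textbf{Lin}(A_0,A_0)$, with $\Psi\mapsto\Phi\circ\Psi$; hence $L$ is injective exactly when $\Phi$ is. Establishing injectivity of $\Phi$ is where the broadcasting structure of $\mathfrak{B}^\star$ (which is itself injective since $\Tr_{A_1}\circ\mathfrak{B}^\star=\id_{A_0}$) should enter, and it is immediate for the FP-product, where $\id_{A_0}\star_{\text{FP}}\mds{1}=\swap$ gives $\Phi=\id_{A_0}$.
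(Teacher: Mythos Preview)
Your argument is correct and is essentially identical to the paper's proof: pick a state decomposition $a=\sum_i\mu_i\rho^i$, use linearity of $\mathfrak{B}^\star$, apply conditionability for the $1$-chain $\id_{A_0}$ to each $\rho^i$, and factor. The lemma only asserts \emph{existence} of some $\Theta_a$ satisfying \eqref{SFX359} together with the explicit formula via a chosen decomposition, so your final paragraph on well-definedness is not needed here; the paper defers the question of decomposition-independence (equivalently, linearity of $\rho\mapsto\Theta_\rho$) to the subsequent Lemma~\ref{LSMXS71}, item~\ref{LSMXS711}, which requires the additional hypothesis of state-linearity.
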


\begin{proof}
Let $a\in A_0$ and let $a=\sum_i \mu_i \rho^i$ be a state decomposition of $a$. Then
\begin{align*}
\mathfrak{B}^{\star}(a)&=\mathfrak{B}^{\star}\Big(\sum_i \mu_i \rho^i\Big)=\sum_i\mu_i\mathfrak{B}^{\star}(\rho^i)=\sum_i \mu_i (\id_{A_0}\star \rho^i)
=\sum_i \mu_i \Big((\Theta_{\rho^i}\otimes \id_{A_1})(\id_{A_0}\star \mathds{1})\Big) \\
&=\left(\Big(\sum_i \mu_i\Theta_{\rho^i}\Big)\otimes \id_{A_1}\right)(\id_{A_0}\star \mathds{1})\, ,
\end{align*}
as desired.
\end{proof}

\begin{lemma} \label{LSMXS71}
Let $\star$ be a spatiotemporal product which is conditionable and state-linear, and let $A_0$ be a finite-dimensional quantum system. Then the following statements hold. 
\begin{enumerate}
\item  \label{LSMXS712}
The \jami map $\Phi^{(\id_{A_0})}$ associated with a spatiotemporal product $\star$ is invertible.
\item \label{LSMXS711}
The state rendering map $\Theta_{\rho}$ is linear in $\rho$, i.e., for all $\rho,\sigma \in \mathfrak{S}(A)$ and for all $a,b\in \mathbb{C}$,
\begin{equation} \label{THETXLXS81}
\Theta_{a\rho+b\sigma}=a\Theta_{\rho}+b\Theta_{\sigma}\, .
\end{equation}
\end{enumerate}
\end{lemma}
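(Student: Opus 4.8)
Write $\Omega := \id_{A_0}\star\mds{1}$. By \eqref{SWXPJSX67} specialized to the $1$-chain $(\id_{A_0})$ we have $\Omega = (\id_{A_0}\otimes \Phi^{(\id_{A_0})})(\swap)$ and $\mathscr{J}^{-1}[\Omega]=\Phi^{(\id_{A_0})}$, while conditionability supplies, for each state $\rho$, the single working formula
\begin{equation*}
\id_{A_0}\star\rho = (\Theta_\rho\otimes\id_{A_1})(\Omega)\,.
\end{equation*}
Since $(\id_{A_0})$ is a $1$-chain, Definition~\ref{QSOTDFX67} forces $\mathfrak{B}^{\star}$ to be a genuine broadcasting map; in particular $\Tr_{A_0}[\id_{A_0}\star\rho]=\rho$. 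The plan is to first establish the invertibility of item~\ref{LSMXS712}, and then to obtain the linearity of item~\ref{LSMXS711} as a formal consequence.

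\textbf{Item~\ref{LSMXS712} (invertibility).} I would extract invertibility from surjectivity plus dimension counting. Taking $\Tr_{A_0}$ of the working formula turns the left-hand side into $\rho$. On the right-hand side $\Tr_{A_0}$ merges with $\Theta_\rho$ into the linear functional $X\mapsto \Tr[\Theta_\rho(X)]$; letting $\ell_\rho$ be its Riesz representative and using $\mathscr{J}^{-1}[\Omega](X)=\Tr_{A_0}[(X\otimes\mds{1})\Omega]$, one checks that $\Tr_{A_0}[(\Theta_\rho\otimes\id_{A_1})(\Omega)] = \Phi^{(\id_{A_0})}(\ell_\rho)$. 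Hence $\rho = \Phi^{(\id_{A_0})}(\ell_\rho)$, so every state lies in the image of $\Phi^{(\id_{A_0})}$. As the states span $A_1$ and $\dim A_0=\dim A_1$, the map $\Phi^{(\id_{A_0})}$ is surjective, hence invertible.

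\textbf{Item~\ref{LSMXS711} (linearity).} With invertibility in hand this reduces to an injectivity argument. Using $\Omega=(\id_{A_0}\otimes\Phi^{(\id_{A_0})})(\swap)$ I would rewrite the assignment $\mathcal{T}\mapsto (\mathcal{T}\otimes\id_{A_1})(\Omega)$ as $\mathcal{T}\mapsto(\mathcal{T}\otimes\Phi^{(\id_{A_0})})(\swap)=\sum_{i,j}\mathcal{T}(E_{ij})\otimes\Phi^{(\id_{A_0})}(E_{ji})$. Since $\Phi^{(\id_{A_0})}$ is invertible, $\{\Phi^{(\id_{A_0})}(E_{ji})\}$ is a basis of $A_1$, so this assignment is injective in $\mathcal{T}$. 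But $\id_{A_0}\star\rho=(\Theta_\rho\otimes\id_{A_1})(\Omega)$ is linear in $\rho$ by state-linearity, whence $(\Theta_{a\rho+b\sigma}\otimes\id_{A_1})(\Omega)=((a\Theta_\rho+b\Theta_\sigma)\otimes\id_{A_1})(\Omega)$; injectivity then yields $\Theta_{a\rho+b\sigma}=a\Theta_\rho+b\Theta_\sigma$, which is \eqref{THETXLXS81}.

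\textbf{Main obstacle.} The only genuinely delicate step is the partial-trace manipulation in item~\ref{LSMXS712}: one must correctly push $\Tr_{A_0}$ through $\Theta_\rho$ and recognize the outcome as $\Phi^{(\id_{A_0})}$ applied to the Riesz representative $\ell_\rho$, which is precisely what converts the broadcasting marginal condition into the statement that $\Phi^{(\id_{A_0})}$ hits every state. Once surjectivity (hence invertibility) is secured, item~\ref{LSMXS711} is essentially formal.
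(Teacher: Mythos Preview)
Your proposal is correct and follows essentially the same approach as the paper. Both arguments prove item~\ref{LSMXS712} by taking $\Tr_{A_0}$ of the conditionability relation to exhibit every state as $\Phi^{(\id_{A_0})}$ applied to some operator (your Riesz representative $\ell_\rho$ is precisely the paper's $\Theta_\rho^{*}(\mathds{1})^{\dag}$), and both deduce item~\ref{LSMXS711} by using the invertibility of $\Phi^{(\id_{A_0})}$ to turn $\mathcal{T}\mapsto(\mathcal{T}\otimes\id)(\Omega)$ into an injection and then invoking state-linearity.
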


\begin{proof}
\underline{Item~\ref{LSMXS712}}:
Let $\Phi^{(\id_{A_0})}$ and $\mathfrak{B}^{\star}:A_0\to A_0A_1$ be the \jami and broadcasting maps associated with the conditionable spatiotemporal product $\star$, and let $\Psi:A_0\to A_0$ be the map given by $\Psi(a)=\Theta_a^{*}(\mathds{1})^{\dag}$, where $\Theta_{a}^*$ is the Hilbert-Schmidt adjoint of the map $\Theta_a$ defined by \eqref{SFX359}. Since $\Theta_{a}$ is a linear combination of state-rendering maps $\Theta_{\rho}$, it follows from item~\ref{LSMXS711} that the map $\Psi$ is linear. For all $a\in A$ we then have
\begin{align*}
a&=\Tr_{A_0}[\mathfrak{B}^{\star}(a)] && \text{since $\mathfrak{B}^{\star}$ is a broadcasting map} \\
&=\Tr_{A_0}\Big[(\Theta_a\otimes \id_{A_1})(\id_{A_0}\star \mathds{1})\Big] && \text{by equation \eqref{SFX359}} \\
&=\Tr_{A_0}\Big[(\Theta_a\otimes \id_{A_1})\Big((\id_{A_0}\otimes \Phi^{(\id_{A_0})})(\swap)\Big)\Big] && \text{by definition of $\Phi^{(\id_{A_0})}$} \\
&=\Tr_{A_0}\Big[\sum_{i,j}\Theta_a(E_{ij})\otimes \Phi^{(\id_{A_0})}(E_{ji})\Big] && \text{since $\swap=\sum_{i,j}E_{ij}\otimes E_{ji}$} \\
&=\sum_{i,j}\Tr[\Theta_a(E_{ij})]\Phi^{(\id_{A_0})}(E_{ji}) && \text{by properties of partial trace} \\
&=\sum_{i,j}\Tr[\Theta_a^*(\mathds{1})^{\dag}E_{ij}]\Phi^{(\id_{A_0})}(E_{ji}) && \text{by definition of $\Theta_a^*$} \\
&=\sum_{i,j}\Phi^{(\id_{A_0})}(\Theta_a^{*}(\mathds{1})^{\dag}_{ji}E_{ji}) && \text{since $\Tr[\Theta_a^*(\mathds{1})^{\dag}E_{ij}]=\Theta_a^*(\mathds{1})^{\dag}_{ji}$}  \\
&=\Phi^{(\id_{A_0})}(\Theta_a^*(\mathds{1})^{\dag}) && \text{since $\Phi^{(\id_{A_0})}$ is linear} \\
&=(\Phi^{(\id_{A_0})}\circ \Psi)(a)\, . && \text{by the definition of $\Psi$}
\end{align*}
It then follows that $\Psi$ is a right inverse of $\Phi^{(\id_{A_0})}$, which is necessarily also a left inverse of $\Phi^{(\id_{A_0})}$ since $\Phi^{(\id_{A_0})}$ and $\Psi$ are both linear, thus $\Phi^{(\id_{A_0})}$ is invertible, as desired.

\underline{Item~\ref{LSMXS711}}: By definition, we have
\begin{equation} \label{eqn:ALem2e1}
    \id\star(a\rho+b\sigma)=(\Theta_{a\rho+b\sigma}\otimes\id_{A_1})(\id_{A_0}\star\mds{1}).
\end{equation}
However, by linearity of $\star$ in $\rho$, we also have
\begin{equation} \label{eqn:ALem2e2}
    \id\star(a\rho+b\sigma)=a(\id\star\rho) + b(\id\star\sigma) = a (\Theta_\rho\otimes\id_{A_1})(\id_{A_0}\star\mds{1})+ b (\Theta_\sigma\otimes\id_{A_1})(\id_{A_0}\star\mds{1}).
\end{equation}
By the proof of Item~\ref{LSMXS712}, $\id_{A_0}\star\mds{1}=(\id_{A_0}\otimes\Psi)(\swap)$ with an invertible $\Psi$. Hence, after applying $\id_{A_0}\otimes \Psi^{-1}$, the equality between \eqref{eqn:ALem2e1} and \eqref{eqn:ALem2e2} yields
\begin{equation}
    (\Theta_{a\rho+b\sigma}\otimes\id_{A_1})(\swap)=\qty(a(\Theta_\rho\otimes\id_{A_1})+b(\Theta_\sigma\otimes\id_{A_1}))(\swap).
\end{equation}
Equation \eqref{THETXLXS81} then follows from the \jami isomorphism, as desired.

\end{proof}

\section{Broadcasting, conditionability and decomposability}\label{SM1}
There are various notions that axiomatize the idea that an initial state and a process represented by a CPTP map can both be distinguished in QSOTs. In this section, we elucidate the interrelation between these notions, which will be crucial for the proof of our main theorem. 

\begin{tcolorbox}[breakable, enhanced jigsaw, colback=white, colframe=black,boxrule=0.2mm]
    \begin{definition} \label{DXBCS67}
    A 1-chain spatiotemporal product $\star$ is said to be
    \begin{itemize}
        \item \textbf{broadcasting} iff for every state $\rho\in \mathfrak{S}(A)$ and every $\E\in \bold{CPTP}(A,B)$,
    \begin{equation}
        \E\star\rho= (\id_A \otimes \E) (\id\star\rho)\, .
    \end{equation}
    \item \textbf{conditionable} iff for every state $\rho\in \mathfrak{S}(A)$ there exists a linear map $\Theta_\rho:A\to A$ such that for every $\E\in \bold{CPTP}(A,B)$,
    \begin{equation}
        \E\star\rho=(\Theta_\rho \otimes \id_B) (\E\star\mds{1}_A)\, .
    \end{equation}
    \item \textbf{decomposable} iff for every state $\rho\in \mathfrak{S}(A)$ and $\E\in \bold{CPTP}(A,B)$ there exist linear maps $\Theta_\rho:A\to A$ and $\Psi^\E:A\to B$ such that
    \begin{equation} \label{DCMBXY87}
        \E\star\rho = (\Theta_\rho\otimes \Psi^\E)(\swap)\, ,
    \end{equation}
    where $\swap$ is the swap operator on two copies of $A$.
    \end{itemize}
    \end{definition}
\end{tcolorbox}

The maps $\Theta_{\rho}$ and $\Psi^{\E}$ appearing in Definition~\ref{DXBCS67} will be referred to as \emph{state-rendering} and \emph{process-rendering} maps, respectively. We note that the state rendering maps $\Theta_{\rho}$ associated with conditionability and decomposability could a priori be different. Given a family of state rendering maps $\Theta_{\rho}$ associated with a spatiotemporal product $\star$, we let $\Theta_{\mathds{1}}$ denote $d_A\Theta_{\pi}$, where $\pi=\mathds{1}/d_A$ is the maximally mixed state on $A$.

\begin{lemma}\label{DDAGXS57}
Let $A$ be a finite-dimensional quantum system, and let $\Theta:A\to A$ be a linear map. Then there exists a unique linear map $\Theta^{\ddag}:A\to A$ such that 
\begin{equation} \label{DDAGXS577}
(\Theta_A\otimes \id_{A})(\swap_{AA'})=(\id_A\otimes \Theta_{A'}^{\ddag})(\swap_{AA'})\, ,
\end{equation}
where $\Theta_{A'}^\ddag$ is the linear map $\Theta^\ddag$ applied to system $A'$ which is a copy of $A$.
\end{lemma}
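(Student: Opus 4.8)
The plan is to reduce the statement directly to the invertibility of the \jami isomorphism $\mathscr{J}:\textbf{Lin}(A,A)\to A\otimes A$ recalled above. The key observation is that the right-hand side of \eqref{DDAGXS577} is, by the very definition of $\mathscr{J}$, nothing but the \jami image of the sought-for map: indeed $(\id_A\otimes \Theta^{\ddag}_{A'})(\swap)=\mathscr{J}[\Theta^{\ddag}]$. Hence the defining equation \eqref{DDAGXS577} can be rewritten as $\mathscr{J}[\Theta^{\ddag}]=(\Theta_A\otimes \id_{A'})(\swap)$, which turns the problem of solving for $\Theta^{\ddag}$ into the problem of inverting $\mathscr{J}$ on a fixed target operator.

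First I would note that $(\Theta_A\otimes \id_{A'})(\swap)$ is manifestly a well-defined element of $A\otimes A$, since $\Theta\otimes \id$ is a linear endomorphism of $A\otimes A$ and $\swap\in A\otimes A$. Because $\mathscr{J}$ is a linear isomorphism, the equation $\mathscr{J}[\Theta^{\ddag}]=(\Theta_A\otimes \id_{A'})(\swap)$ then admits the unique solution
\[
\Theta^{\ddag}=\mathscr{J}^{-1}\big[(\Theta_A\otimes \id_{A'})(\swap)\big]\, ,
\]
which establishes existence and uniqueness simultaneously. Linearity of $\Theta^{\ddag}$ is automatic, as it is by construction an element of $\textbf{Lin}(A,A)$, and uniqueness is precisely the injectivity of $\mathscr{J}$.

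For transparency I would also exhibit $\Theta^{\ddag}$ more structurally, which doubles as a consistency check. Introducing the flip superoperator $F:A\otimes A\to A\otimes A$, defined as the linear extension of $X\otimes Y\mapsto Y\otimes X$, one verifies the operator identity $F(\swap)=\swap$ together with the intertwining relation $F\circ(\Theta\otimes \id)=(\id\otimes \Theta)\circ F$. Applying $F$ to $(\Theta_A\otimes \id_{A'})(\swap)$ then collapses it to $(\id\otimes \Theta)(\swap)=\mathscr{J}[\Theta]$, so that $\Theta^{\ddag}=\mathscr{J}^{-1}\circ F\circ \mathscr{J}\,(\Theta)$. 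This realizes $\ddag$ as the conjugation of $\mathscr{J}$ by the flip and, since $F^2=\id$, shows that $\ddag$ is an involution; a direct computation in the matrix-unit basis $\{E_{ij}\}$ recovers the same $\Theta^{\ddag}$ in components and serves as a final sanity check.

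The only genuine subtlety, rather than a true obstacle, is to keep the operator $\swap\in A\otimes A$ conceptually distinct from the flip superoperator $F$ acting on $A\otimes A$: the two are related by $F(\swap)=\swap$ but are different objects, and conflating them is the one place an argument of this kind can go astray. Beyond this bookkeeping, the result is essentially a restatement of the bijectivity of the \jami isomorphism.
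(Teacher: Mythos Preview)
Your argument is correct and is essentially the same as the paper's: the paper defines $\mathscr{J}'=\gamma\circ\mathscr{J}$ with $\gamma$ the swap isomorphism (your flip $F$), sets $\Theta^{\ddag}=\mathscr{J}^{-1}[\mathscr{J}'[\Theta]]$, and verifies $(\Theta\otimes\id)(\swap)=\mathscr{J}'[\Theta]=\mathscr{J}[\Theta^{\ddag}]$ with uniqueness coming from injectivity of $\mathscr{J}$. Your additional remark that $\ddag$ is an involution since $F^2=\id$ is a nice bonus not spelled out in the paper.
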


\begin{proof}
Let $\Theta:A\to A$ be a linear map, let $\mathscr{J}:\bold{Lin}(A,A)\to AA'$ be the \jami isomorphism, let $\gamma:AA'\to AA'$ be the swap isomorphism, let $\mathscr{J}'=\gamma\circ \mathscr{J}$, and let $\Theta^{\ddag}=\mathscr{J}^{-1}[\mathscr{J}'[\Theta]]$. Then
\[
(\Theta\otimes \id_{A})(\swap_{AA'})=\mathscr{J}'[\Theta]=\mathscr{J}[\Theta^{\ddag}]=(\id_A\otimes \Theta^{\ddag})(\swap_{AA'})\, ,
\]
thus \eqref{DDAGXS577} holds. Moreover, $\Theta^{\ddag}$ is the unique map satisfying \eqref{DDAGXS577}, since for any map $\widetilde{\Theta}^{\ddag}$ satisfying \eqref{DDAGXS577} it necessarily follows that $\mathscr{J}[\widetilde{\Theta}^{\ddag}]=\mathscr{J}'[\Theta]=\mathscr{J}[\Theta^{\ddag}]$, which implies $\widetilde{\Theta}^{\ddag}=\Theta^{\ddag}$.
\end{proof}

\begin{lemma} \label{LXMA19}
Conditionability $\implies$ Decomposability $\implies$ Broadcasting.
\end{lemma}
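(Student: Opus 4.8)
**The plan is to prove the two implications separately, working directly from the three definitions in Definition~\ref{DXBCS67} and making use of the adjoint construction from Lemma~\ref{DDAGXS57}.**

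For the first implication, Conditionability $\implies$ Decomposability, the goal is to show that the operator $\E\star\mds{1}_A$ appearing in the conditionability definition can itself be written in the swap-factored form. The key observation is that $\E\star\mds{1}_A$ is essentially the \jami matrix of $\E$: applying the conditionability relation with $\E=\id_A$ and unwinding definitions, one sees that $\id\star\mds{1}_A$ serves as a universal template. The natural strategy is to start from the conditionability equation $\E\star\rho=(\Theta_\rho\otimes\id_B)(\E\star\mds{1}_A)$ and express $\E\star\mds{1}_A$ itself via the swap operator. Concretely, I would argue that $\E\star\mds{1}_A=(\id_A\otimes\E)(\swap)$ up to the action of some fixed linear map, using the structure already exhibited in the proof of Lemma~\ref{LSMXS71} (where $\id_{A_0}\star\mds{1}=(\id_{A_0}\otimes\Psi)(\swap)$). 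Substituting this back gives $\E\star\rho=(\Theta_\rho\otimes(\E\circ\Psi))(\swap)$, which is precisely decomposability with process-rendering map $\Psi^\E:=\E\circ\Psi$. The main subtlety here is correctly tracking that the state-rendering map for decomposability may differ from the one for conditionability (as flagged in the remark following Definition~\ref{DXBCS67}), so I would be careful to allow an invertible linear adjustment rather than claiming the two maps coincide.

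For the second implication, Decomposability $\implies$ Broadcasting, I would start from $\E\star\rho=(\Theta_\rho\otimes\Psi^\E)(\swap)$ and aim to reconstruct the broadcasting identity $\E\star\rho=(\id_A\otimes\E)(\id\star\rho)$. The plan is to first apply decomposability with $\E=\id_A$ to compute $\id\star\rho=(\Theta_\rho\otimes\Psi^{\id})(\swap)$, and then show that postcomposing with $\id_A\otimes\E$ reproduces the general case. The crucial step is to relate $\Psi^\E$ to $\E\circ\Psi^{\id}$; this should follow because the state-rendering map $\Theta_\rho$ is independent of $\E$ in the definition of decomposability, so the channel-dependence is entirely carried by the process-rendering map. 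Here Lemma~\ref{DDAGXS57} becomes essential: the adjoint map $\Theta^\ddag$ lets me move the action of $\Theta_\rho$ from the first to the second tensor factor of the swap, which is the manipulation needed to commute the two rendering maps past each other and identify $\Psi^\E$ with $\E$ composed against the fixed template $\id\star\rho$.

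I expect the main obstacle to be the bookkeeping around which state-rendering map is in play at each stage, and justifying that the process-rendering map factors as $\Psi^\E=\E\circ\Psi^{\id}$ rather than having a more complicated $\E$-dependence. The definition of decomposability asserts existence of $\Theta_\rho$ and $\Psi^\E$ but does not a priori guarantee that $\Theta_\rho$ can be chosen uniformly in $\E$; establishing this uniformity (or working around it via the invertibility of $\Phi^{(\id_{A_0})}$ from Lemma~\ref{LSMXS71}) is where the real content lies. Once the rendering maps are pinned down consistently, both implications reduce to linear-algebraic identities on the swap operator that are handled cleanly by the \jami isomorphism and the adjoint relation \eqref{DDAGXS577}.
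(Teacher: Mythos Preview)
Your overall strategy matches the paper's, but there is one unjustified step in the first implication and one missing ingredient in the second.

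\textbf{First implication.} You write that $\E\star\mathds{1}_A=(\id_A\otimes\E)(\swap)$ ``up to the action of some fixed linear map,'' and then set $\Psi^\E:=\E\circ\Psi$. But the claim that the process-rendering map factors through $\E$ is not available from conditionability alone; it would amount to $\E\star\mathds{1}=(\id_A\otimes\E)(\id\star\mathds{1})$, which is precisely the broadcasting identity at $\rho=\mathds{1}$---the very thing you are ultimately trying to prove. The paper avoids this circularity by using the \jami isomorphism directly: for \emph{any} bipartite operator one has $\E\star\mathds{1}=(\id_A\otimes\Phi^{(\E)})(\swap)$ with $\Phi^{(\E)}:=\mathscr{J}^{-1}[\E\star\mathds{1}]$, no factorization through $\E$ assumed. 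Substituting into the conditionability equation immediately gives $\E\star\rho=(\Theta_\rho\otimes\Phi^{(\E)})(\swap)$, which is decomposability with the \emph{same} state-rendering map $\Theta_\rho$. Your worry about the two $\Theta_\rho$'s differing is therefore unnecessary.

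\textbf{Second implication.} Your plan to establish $\Psi^\E=\E\circ\Psi^{\id}$ and your intention to invoke Lemma~\ref{DDAGXS57} are both correct, but you do not name the mechanism that actually pins down $\Psi^\E$: the marginal condition $\Tr_A[\E\star\rho]=\E(\rho)$ built into Definition~\ref{QSOTDFX67}. The paper computes, via Lemma~\ref{DDAGXS57},
\[
\E(\rho)=\Tr_A\big[(\Theta_\rho\otimes\Psi^\E)(\swap)\big]=\Psi^\E\big(\Theta_\rho^{\ddag}(\mathds{1})\big),
\]
so that $\E=\Psi^\E\circ\Phi$ with $\Phi(\rho):=\Theta_\rho^{\ddag}(\mathds{1})$. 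Taking $\E=\id$ shows $\Phi$ is invertible with $\Psi^{\id}=\Phi^{-1}$, whence $\Psi^\E=\E\circ\Psi^{\id}$ and broadcasting follows. Your concern that $\Theta_\rho$ might not be uniform in $\E$ is misplaced: Definition~\ref{DXBCS67} already writes $\Theta_\rho$ with only $\rho$ as a subscript, so the uniformity is part of the hypothesis.
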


\begin{proof}
Suppose $\star$ is conditionable. Then for every $\E\in \bold{CPTP}(A,B)$ and state $\rho\in \mathfrak{S}(A)$ we have
\begin{align*}
\E\star \rho&=(\Theta_{\rho}\otimes \id_B)(\E\star \mathds{1}) && \text{since $\star$ is conditionable} \\
&=(\Theta_{\rho}\otimes \id_B)\Big((\id_A\otimes \Phi^{(\E)})(\swap)\Big) && \text{by definition of $\Phi^{(\E)}$} \\
&=(\Theta_{\rho}\otimes \Phi^{(\E)})(\swap)\, , && \text{by properties of $\otimes$}
\end{align*}
thus $\star$ is decomposable with process rendering map equal to the \jami map $\Phi^{(\E)}$.

Now suppose $\star$ is decomposable, and let $\Theta_{\rho}$ and $\Psi^{\E}$ be the associated state-rendering and process-rendering maps as defined by \eqref{DCMBXY87} for every state $\rho$ and for every channel $\E$. We then have
\begin{align*}
\E(\rho)&=\Tr_A [\E\star\rho] && \text{by definition of $\star$} \\
&=\Tr_{A}\Big[\Big(\Theta_{\rho}\otimes \Psi^{\E}\Big)(\swap)\Big] && \text{since $\star$ is decomposable} \\
&=\Tr_{A}\Big[\Big(\id_A\otimes \Psi^{\E}\Big)\Big(\id_A\otimes \Theta_{\rho}^{\ddag}\Big)(\swap)\Big] && \text{by Lemma~\ref{DDAGXS57}} \\
&=\Tr_{A}\Big[\sum_{i,j}E_{ij}\otimes \Psi^{\E}(\Theta_{\rho}^{\ddag}(E_{ji}))\Big] && \text{since $\swap=\sum_{i,j}E_{ij}\otimes E_{ji}$} \\
&=\sum_{i,j}\Tr[E_{ij}]\Psi^{\E}(\Theta_{\rho}^{\ddag}(E_{ji})) && \text{by properties of partial trace} \\
&=\sum_{i,j}\delta_{ij}\Psi^{\E}(\Theta_{\rho}^{\ddag}(E_{ji})) && \text{since $\Tr[E_{ij}]=\delta_{ij}$} \\
&=\Psi^\E(\Theta_\rho^{\ddag}(\mds{1}))\, , && \text{since $\sum_i E_{ii}=\mathds{1}$}
\end{align*}
from which it follows that $\E=\Psi^\E\circ \Phi$, where $\Phi:A\to A$ is the map given by $\Phi(\rho)= \Theta_\rho^{\ddag}(\mds{1})$. We note that the map $\Phi$ is linear by item~\ref{LSMXS711} of Lemma~\ref{LSMXS71}. By considering the case $\E=\id$ one may conclude that $\Phi$ is invertible and that $\Psi^\id=\Phi^{-1}$, hence $\Psi^\E=\E\circ \Psi^\id$. We then have
\begin{align*}
\E\star \rho&=(\Theta_\rho\otimes\Psi^\E)(\swap) && \text{since $\star$ is decomposable} \\
&=(\id_A\otimes \E) \left((\Theta_\rho\otimes\Psi^\id)(\swap)\right) && \text{since $\Psi^\E=\E\circ \Psi^\id$} \\
&=(\id_A\otimes \E)(\id_A\star\rho)\, , && \text{since $\star$ is decomposable}
\end{align*}
thus $\star$ is broadcasting, as desired.
\end{proof}

\begin{remark}
It turns out that decomposability with $\Theta_{\mds{1}}$ being the identity map is in fact equivalent to conditionability:

\begin{proof}
\noindent$(\implies)$ Suppose $\star$ is decomposable, let $\Theta_{\rho}$ and $\Psi^{\E}$ be the associated state-rendering and process-rendering maps as defined by \eqref{DCMBXY87} for every state $\rho$ and for every channel $\E\in \bold{CPTP}(A,B)$, and suppose $\Theta_{\mds{1}}=\id_A$. We then have
\begin{align*}
\E\star\rho&=(\Theta_\rho\otimes\Psi^\E)(\swap) && \text{since $\star$ is decomposable} \\
&=(\Theta_\rho\otimes\id_B)\Big((\id_A\otimes\Psi^\E)(\swap)\Big) && \text{by properties of $\otimes$} \\
&=(\Theta_\rho\otimes\id_B)(\Theta_{\mds{1}}\otimes\Psi^\E)(\swap) && \text{since $\Theta_{\mathds{1}}=\id_A$} \\
&=(\Theta_\rho\otimes\id_B)(\E\star\mds{1})\, , && \text{since $\star$ is decomposable}&
\end{align*}
thus $\star$ is conditionable.

\noindent$(\impliedby)$ We have seen from the proof of Lemma~\ref{LXMA19} that conditionability implies decomposability, and that in such a case the state-rendering map $\Theta_{\rho}$ defining conditionability is in fact the state-rendering map defining decomposability. As such, we just have to show that the state-rendering map associated with a conditionable spatiotemporal product is such that $\Theta_{\mathds{1}}=\id_{A}$.  

So suppose $\star$ is conditionable, and let $\Theta_{\mathds{1}}$ be the associated state-rendering map. We then have
\begin{align*}
(\id_A\otimes\Phi^{(\id_A)})(\swap)&=\id_A\star\mds{1} && \text{by equation \eqref{SWXPJSX67}} \\
&=(\Theta_{\mds{1}}\otimes\id_{A})(\id\star\mds{1}) && \text{since $\star$ is conditionable} \\
&=(\Theta_{\mds{1}}\otimes\id_{A})\Big((\id_A\otimes\Phi^{(\id_A)})(\swap)\Big) && \text{by equation \eqref{SWXPJSX67}} \\
&=(\Theta_{\mds{1}}\otimes\Phi^{(\id_A)})(\swap)\, . && \text{by properties of $\otimes$}
\end{align*}
By item~\ref{LSMXS712} Lemma~\ref{LSMXS71} the map $\Phi^{(\id_A)}$ is invertible, thus $(\Theta_{\mathds{1}}\otimes \id_A)(\swap)=\swap$. Now define the map $\mathscr{J}'=\gamma \circ \mathscr{J}$, where $\gamma:AA'\to AA'$ is the swap isomorphism and $\mathscr{J}:\bold{Lin}(A,A)\to AA'$ is the \jami isomorphism. We then have
\[
\mathscr{J}'[\Theta_{\mathds{1}}]=(\Theta_{\mathds{1}}\otimes \id_A)(\swap)=\swap=\mathscr{J}[\id_A]\implies \Theta_{\mds{1}}=\id_A\, ,
\]
where the final implication follows from the fact that $\mathscr{J}'$ is an isomorphism.
\end{proof}
\end{remark}

We now define a mapping associated with a spatiotemporal product which will is fundamental to the notion of iterativity.

\begin{definition}
Given a spatiotemporal product $\star$ and a channel $\E\in \bold{CPTP}(A,B)$, the \emph{bloom} of $\E$ with respect to $\star$ is the linear map $\E^{\star}:A\to AB$ corresponding to the unique linear extension of the map on states given by $\rho\mapsto \E\star \rho$.
\end{definition}

\begin{lemma} \label{PCXDXS537}
Let $\star$ be a spatiotemporal product which is broadcasting, and for every $n$-chain $(\E_1,\ldots,\E_n)\in \bold{CPTP}(A_0,\ldots,A_n)$, let $\Psi^{(\E_1,\ldots,\E_n)}$ be the map given by
\begin{equation}\label{PSISX37}
\Psi^{(\E_1,\ldots,\E_n)}=
\begin{cases}
\E_1 \hspace{7.70cm} \text{if $n=1$}\, , \\
\E_2^{\star}\circ \E_1 \hspace{6.99cm} \text{if $n=2$}\, , \\
(\id_{A_1\cdots A_{n-2}}\otimes \E_n^{\star})\circ \cdots \circ (\id_{A_1}\otimes \E_3^{\star})\circ \E_2^{\star}\circ \E_1 \hspace{1.17cm} \text{if $n>2$}\, .
\end{cases}
\end{equation}
Then for every $n$-chain $(\E_1,\ldots,\E_n)\in \bold{CPTP}(A_0,\ldots,A_n)$ and for every state $\rho\in \mathscr{S}(A_0)$,
\begin{equation}\label{NBXSTX87}
\Big(\id_{A_0}\otimes \Psi^{(\E_1,\ldots,\E_n)}\Big)\Big(\id_{A_0}\star \rho\Big)=\Big((\id_{A_0\cdots A_{n-2}}\otimes \E_n^{\star})\circ \cdots \circ (\id_{A_0}\otimes \E_2^{\star})\circ \E_1^{\star}\Big)(\rho)\
\end{equation}
\end{lemma}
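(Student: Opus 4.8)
The plan is to prove \eqref{NBXSTX87} by induction on $n$, exploiting the fact that both sides obey the same recursion once the top channel $\E_n$ is peeled off. Throughout I write $\mathfrak{B}:=\mathfrak{B}^{\star}$ for the broadcasting map $\rho\mapsto\id_{A_0}\star\rho$, and I abbreviate the right-hand side of \eqref{NBXSTX87} by $\Xi^{(\E_1,\ldots,\E_n)}:=(\id_{A_0\cdots A_{n-2}}\otimes\E_n^{\star})\circ\cdots\circ(\id_{A_0}\otimes\E_2^{\star})\circ\E_1^{\star}$. Since every map in sight (the blooms $\E_i^{\star}$, the broadcasting map $\mathfrak{B}$, and their tensor products with identities) is linear, it suffices to establish the equality of linear maps $(\id_{A_0}\otimes\Psi^{(\E_1,\ldots,\E_n)})\circ\mathfrak{B}=\Xi^{(\E_1,\ldots,\E_n)}$; the stated identity on a state $\rho$ then follows by restriction.

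First I would dispatch the base case $n=1$. Here $\Psi^{(\E_1)}=\E_1$ and $\Xi^{(\E_1)}=\E_1^{\star}$, so the claim reads $(\id_{A_0}\otimes\E_1)\circ\mathfrak{B}=\E_1^{\star}$, which is precisely the broadcasting hypothesis $\E_1\star\rho=(\id_{A_0}\otimes\E_1)(\id_{A_0}\star\rho)$. This is the only point at which broadcasting is invoked, and it seeds the induction.

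Next I would record the two recursions. Reading off \eqref{PSISX37} directly gives, for all $n\geq 2$ (with the convention that the identity factor $\id_{A_1\cdots A_{n-2}}$ degenerates to the identity on the trivial system when $n=2$),
\[
\Psi^{(\E_1,\ldots,\E_n)}=(\id_{A_1\cdots A_{n-2}}\otimes\E_n^{\star})\circ\Psi^{(\E_1,\ldots,\E_{n-1})}\, ,
\]
while by definition $\Xi^{(\E_1,\ldots,\E_n)}=(\id_{A_0\cdots A_{n-2}}\otimes\E_n^{\star})\circ\Xi^{(\E_1,\ldots,\E_{n-1})}$. For the inductive step I would tensor the $\Psi$-recursion with $\id_{A_0}$ and apply the interchange law $\id_{A_0}\otimes(f\circ g)=(\id_{A_0}\otimes f)\circ(\id_{A_0}\otimes g)$ together with $\id_{A_0}\otimes(\id_{A_1\cdots A_{n-2}}\otimes\E_n^{\star})=\id_{A_0\cdots A_{n-2}}\otimes\E_n^{\star}$ to get $\id_{A_0}\otimes\Psi^{(\E_1,\ldots,\E_n)}=(\id_{A_0\cdots A_{n-2}}\otimes\E_n^{\star})\circ(\id_{A_0}\otimes\Psi^{(\E_1,\ldots,\E_{n-1})})$. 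Composing on the right with $\mathfrak{B}$ and substituting the inductive hypothesis $(\id_{A_0}\otimes\Psi^{(\E_1,\ldots,\E_{n-1})})\circ\mathfrak{B}=\Xi^{(\E_1,\ldots,\E_{n-1})}$ yields $(\id_{A_0}\otimes\Psi^{(\E_1,\ldots,\E_n)})\circ\mathfrak{B}=(\id_{A_0\cdots A_{n-2}}\otimes\E_n^{\star})\circ\Xi^{(\E_1,\ldots,\E_{n-1})}=\Xi^{(\E_1,\ldots,\E_n)}$, closing the induction.

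I expect the main obstacle to be purely combinatorial bookkeeping rather than conceptual: one must track carefully which subsystems each identity factor acts upon, since the $\Psi$-recursion carries an identity on $A_1\cdots A_{n-2}$ whereas the $\Xi$-recursion carries one on $A_0\cdots A_{n-2}$, the discrepancy being exactly the $A_0$ slot that $\id_{A_0}\otimes(-)$ together with $\mathfrak{B}$ reinstate. The genuine content is concentrated entirely in the base case, where broadcasting converts the ``external'' application of $\E_1$ after duplication via $\mathfrak{B}$ into the bloom $\E_1^{\star}$; everything downstream is an exercise in the bifunctoriality of $\otimes$ and the associativity of composition.
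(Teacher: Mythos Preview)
Your proof is correct and follows essentially the same approach as the paper: both reduce the claim to the broadcasting identity $(\id_{A_0}\otimes\E_1)(\id_{A_0}\star\rho)=\E_1\star\rho$ and then use only bifunctoriality of $\otimes$ and associativity of composition. The only cosmetic difference is that the paper treats $n=1$ and $n=2$ explicitly (peeling off $\E_1$ from the bottom) and declares $n>2$ ``similar'', whereas you organise the same computation as a clean induction peeling $\E_n$ from the top.
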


\begin{proof}
For $n=1$ equation \eqref{NBXSTX87} is just the broadcasting condition, which holds by assumption. For $n=2$ we have
\begin{align*}
\Big(\id_{A_0}\otimes \Psi^{(\E_1,\E_2)}\Big)\Big(\id_{A_0}\star \rho\Big)&=\Big(\id_{A_0}\otimes (\E_2^{\star}\circ \E_1)\Big)\Big(\id_{A_0}\star \rho\Big) && \text{by definition of $\Psi^{(\E_1,\E_2)}$} \\
&=\Big(\Big(\id_{A_0}\otimes \E_2^{\star}\Big)\circ \Big(\id_{A_0}\otimes \E_1\Big)\Big)\Big(\id_{A_0}\star \rho\Big) && \text{by properties of $\otimes$} \\
&=\Big(\id_{A_0}\otimes \E_2^{\star}\Big)\Big(\E_1\star \rho\Big) && \text{since $\star$ is broadcasting} \\
&=\Big((\id_{A_0}\otimes \E_2^{\star})\circ \E_1^{\star}\Big)(\rho)\, , && \text{by definition of $\E_1^{\star}$} \\
\end{align*}
thus \eqref{NBXSTX87} holds for $n=2$. The proof for $n>2$ is similar.
\end{proof}

\section{Proof of Main Theorem}

\begin{theorem}
Let $\star$ be a spatiotemporal product which is state-linear and conditionable. Then for every $n$-chain $(\E_1,\ldots,\E_n)\in \bold{CPTP}(A_0,\ldots,A_n)$ with $n>0$, and for every state $\rho\in \mathfrak{S}(A_0)$,
\begin{equation}
(\E_1,\ldots,\E_n)\star \rho=\Big((\id_{A_0\cdots A_{n-2}}\otimes \E_n^{\star})\circ \cdots \circ (\id_{A_0}\otimes \E_2^{\star})\circ \E_1^{\star}\Big)(\rho)\, .
\end{equation}
\end{theorem}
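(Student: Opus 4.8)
The plan is to prove the theorem by induction on $n$, leveraging the preliminary results already established, particularly the chain of implications Conditionability $\implies$ Decomposability $\implies$ Broadcasting from Lemma~\ref{LXMA19}, together with the defining marginalization properties of a spatiotemporal product from Definition~\ref{QSOTDFX67}. The key observation is that the right-hand side of the claimed formula is precisely the operator appearing in equation~\eqref{NBXSTX87} of Lemma~\ref{PCXDXS537}, so it suffices to show that $(\E_1,\ldots,\E_n)\star \rho$ equals $\big(\id_{A_0}\otimes \Psi^{(\E_1,\ldots,\E_n)}\big)(\id_{A_0}\star \rho)$, where $\Psi^{(\E_1,\ldots,\E_n)}$ is the iterated bloom map defined in \eqref{PSISX37}.

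First I would establish the base case $n=1$, which is simply the statement $\E_1\star \rho=\E_1^{\star}(\rho)$, holding by the definition of the bloom map $\E_1^{\star}$. Next, for the inductive step, I would fix an $n$-chain $(\E_1,\ldots,\E_n)$ and assume the formula holds for all chains of length $n-1$. The strategy is to reconstruct $(\E_1,\ldots,\E_n)\star \rho$ from its two partial traces using the marginalization conditions $\Tr_{A_0}[(\E_1,\ldots,\E_n)\star \rho]=\underline{\bm{\E}}\star \E_1(\rho)$ and $\Tr_{A_n}[(\E_1,\ldots,\E_n)\star \rho]=\overline{\bm{\E}}\star \rho$. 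The inductive hypothesis applies to both truncations $\underline{\bm{\E}}=(\E_2,\ldots,\E_n)$ and $\overline{\bm{\E}}=(\E_1,\ldots,\E_{n-1})$, expressing each marginal as an iterated composition of bloom maps acting on the appropriate state. Since conditionability lets us factor out the dependence on $\rho$ via the state-rendering map $\Theta_\rho$ (with $\Theta_\rho$ now linear in $\rho$ by item~\ref{LSMXS711} of Lemma~\ref{LSMXS71}), I would use the decomposability form $(\E_1,\ldots,\E_n)\star \rho=(\Theta_\rho\otimes \id_{A_1\cdots A_n})((\E_1,\ldots,\E_n)\star \mathds{1})$ to reduce the problem to the single operator $(\E_1,\ldots,\E_n)\star \mathds{1}$, which is rigidly pinned down by its marginals together with the invertibility of the relevant \jami map from item~\ref{LSMXS712}.

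The main obstacle I anticipate is showing that the two marginalization constraints together with state-linearity and conditionability are \emph{sufficient} to uniquely determine the full operator $(\E_1,\ldots,\E_n)\star \rho$, rather than merely constraining it. An operator on $A_0\cdots A_n$ is generally not determined by only two of its $(n+1)$ partial traces, so the argument cannot rely on marginals alone; the force of conditionability must be used to collapse the problem to a recursive structure. Concretely, conditionability reduces the full QSOT to $\id_{A_0}\star \rho$ post-processed by a channel-dependent map, and broadcasting (Lemma~\ref{LXMA19}) lets one peel off the action of each $\E_k^{\star}$ one time-step at a time. The delicate point is verifying that the map $\Psi^{(\E_1,\ldots,\E_n)}$ assembled in Lemma~\ref{PCXDXS537} is exactly the one dictated by the marginal $\Tr_{A_0}$ condition applied recursively, so that matching marginals forces equality of the operators themselves. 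I would close the argument by combining Lemma~\ref{PCXDXS537} with the broadcasting identity to identify $(\E_1,\ldots,\E_n)\star \rho$ with $\big(\id_{A_0}\otimes \Psi^{(\E_1,\ldots,\E_n)}\big)(\id_{A_0}\star \rho)$, which by \eqref{NBXSTX87} equals the claimed iterated expression, completing the induction.
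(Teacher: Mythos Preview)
Your proposal converges to the paper's proof in its final paragraph, and the overall inductive architecture is the same. A few clarifications are in order. The $\Tr_{A_n}$ marginal from Definition~\ref{QSOTDFX67} plays no role in the paper's argument, so your plan to ``reconstruct $(\E_1,\ldots,\E_n)\star\rho$ from its two partial traces'' is a red herring. More importantly, the operator $(\E_1,\ldots,\E_n)\star\mathds{1}$ is \emph{not} pinned down by its marginals, even with invertibility of $\Phi^{(\id_{A_0})}$ in hand; rather, invertibility is used to \emph{construct} the map $\chi^{(\E_1,\ldots,\E_n)}=\Phi^{(\E_1,\ldots,\E_n)}\circ(\Phi^{(\id_{A_0})})^{-1}$ satisfying $(\E_1,\ldots,\E_n)\star\mathds{1}=(\id_{A_0}\otimes\chi)(\id_{A_0}\star\mathds{1})$. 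Conditionability then extends this factorization to all $\rho$, yielding equation~\eqref{CHXRSX347}, after which the single $\Tr_{A_0}$ marginal---applied across all $\rho$ and combined with the inductive hypothesis on $\underline{\bm{\E}}$---identifies $\chi=\Psi$. This is precisely what you articulate in your final paragraph, so your plan is sound once the middle paragraph is streamlined to drop the $\Tr_{A_n}$ detour and to use invertibility as a construction rather than a uniqueness constraint.
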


\begin{proof}
Let $n>0$. For every $n$-chain $(\E_1,\ldots,\E_n)\in \bold{CPTP}(A_1\ldots,A_n)$, let $\Phi^{(\E_1,\ldots\E_n)}:A_0\to A_1\cdots A_n$ be the \jami map associated with the $n$-chain $(\E_1,\ldots,\E_n)$ and the spatiotemporal product $\star$, which we recall is defined as
\[
\Phi^{(\E_1,\ldots,\E_n)}=\mathscr{J}^{-1}\Big[(\E_1,\ldots,\E_n)\star \mathds{1}_{A_0}\Big]\, .
\]
Now since $\star$ is conditionable and state-linear, the \jami map $\Phi^{(\id_{A_0})}$ is invertible by item~\ref{LSMXS712} of Lemma~\ref{LSMXS71}, thus we can define a map $\chi^{(\E_1,\ldots,\E_n)}:A_0\to A_1\cdots A_n$ as the composition $\chi^{(\E_1,\ldots,\E_n)}=\Phi^{(\E_1,\ldots,\E_n)}\circ {\Phi^{(\id_{A_0})}}^{-1}$. We then have
\begin{align*}
(\E_1,\ldots,\E_n)\star \mathds{1}_{A_0}&=\Big(\id_{A_0}\otimes \Phi^{(\E_1,\ldots,\E_n)}\Big)(\swap) && \text{by definition of $\Phi^{(\E_1,\ldots,\E_n)}$} \\
&=\Big(\id_{A_0}\otimes \Big(\Phi^{(\E_1,\ldots,\E_n)}\circ {\Phi^{(\id_{A_0})}}^{-1}\circ \Phi^{(\id_{A_0})}\Big)\Big)(\swap) && \text{since $\Phi^{(\id_{A_0})}$ is invertible} \\
&=\Big(\id_{A_0}\otimes \chi^{(\E_1,\ldots,\E_n)}\Big)\left( \Big(\id_{A_0}\otimes \Phi^{(\id_{A_0})}\Big)(\swap)\right) && \text{by definition of $\chi^{(\E_1,\ldots,\E_n)}$} \\
&=\Big(\id_{A_0}\otimes \chi^{(\E_1,\ldots,\E_n)}\Big)\Big(\id_{A_0}\star \mathds{1}\Big)\, , && \text{by definition of $\Phi^{(\id_{A_0})}$} \\
\end{align*}
thus
\begin{equation} \label{CHISTRXS87}
(\E_1,\ldots,\E_n)\star \mathds{1}_{A_0}=\Big(\id_{A_0}\otimes \chi^{(\E_1,\ldots,\E_n)}\Big)\Big(\id_{A_0}\star \mathds{1}\Big)\, .
\end{equation}
We then have
\begin{align*}
(\E_1,\ldots,\E_n)\star \rho&=(\Theta_{\rho}\otimes \id_{A_1\cdots A_n})((\E_1,\ldots,\E_n)\star \mds{1}_{A_0}) && \text{since $\star$ is conditionable} \\
&=\Big(\Theta_{\rho}\otimes \id_{A_1\cdots A_n}\Big)\Big(\Big(\id_{A_0}\otimes \chi^{(\E_1,\ldots,\E_n)}\Big)\Big(\id_{A_0}\star \mathds{1}\Big)\Big) && \text{by equation \eqref{CHISTRXS87}} \\
&=\Big(\id_{A_0}\otimes \chi^{(\E_1,\ldots,\E_n)}\Big)\Big(\Big(\Theta_{\rho}\otimes \id_{A_0}\Big)\Big(\id_{A_0}\star \mathds{1}\Big)\Big) && \text{by properties of $\otimes$} \\
&=\Big(\id_{A_0}\otimes \chi^{(\E_1,\ldots,\E_n)}\Big)\Big(\id_{A_0}\star \rho\Big)\, , && \text{since $\star$ is conditionable} \\
\end{align*}
thus for all $n>0$, for every $n$-chain $(\E_1,\ldots,\E_n)\in \bold{CPTP}(A_0,\ldots,A_n)$ and for every state $\rho\in \mathfrak{S}(A_0)$ we have
\begin{equation}\label{CHXRSX347}
(\E_1,\ldots,\E_n)\star \rho=\Big(\id_{A_0}\otimes \chi^{(\E_1,\ldots,\E_n)}\Big)\Big(\id_{A_0}\star \rho\Big)\, .
\end{equation}
The theorem then follows from Lemma~\ref{PCXDXS537} once we show $\chi^{(\E_1,\ldots,\E_n)}=\Psi^{(\E_1,\ldots \E_n)}$ for all $n$-chains $(\E_1,\ldots,\E_n)\in \bold{CPTP}(A_0,\ldots A_n)$ with $n>0$, where $\Psi^{(\E_1,\ldots \E_n)}$ is given by \eqref{PSISX37}. For this, we use induction on $n>0$. For $n=1$ and for all $\rho\in \mathscr{S}(A_0)$, we have
\begin{align*}
\Psi^{(\E_1)}(\rho)&=\E_1(\rho) && \text{by definition of $\Psi$} \\
&=\Tr_{A_0}\left[\,\E_1 \star \rho\,\right] && \text{by definition of $\star$} \\
&=\Tr_{A_0}\Big[(\id_{A_0}\otimes \chi^{(\E_1)})(\id_{A_0}\star \rho) \Big] && \text{by equation \eqref{CHXRSX347}} \\
&=\chi^{(\E_1)}\Big(\Tr_{A_0}[\,\id_{A_0}\star \rho\,]\Big) && \text{since $\Tr_{A_0}=\Tr\otimes \,\id_{A_1}$} \\
&=\chi^{(\E_1)}(\rho)\, , && \text{by definition of $\star$},
\end{align*}
thus by linearity it follows that $\chi^{(\E_1,\ldots,\E_n)}=\Psi^{(\E_1,\ldots,\E_n)}$ for $n=1$. Now assume $\chi^{(\E_1,\ldots,\E_n)}=\Psi^{(\E_1,\ldots,\E_n)}$ for $n=m-1$ with $m-1>0$, i.e., we assume that for every $(m-1)$-chain $(\F_1,\ldots,\F_{m-1})\in \bold{CPTP}(B_0,\ldots,B_{m-1})$,
\begin{equation} \label{INDXTASXTR737}
\chi^{(\F_1,\ldots,\F_{m-1})}=\Psi^{(\F_1,\ldots,\F_{m-1})}\, .
\end{equation}
In the final step of the proof, we will use the fact that a spatiotemporal product which is conditionable is necessarily broadcasting, which follows from Lemma~\ref{LXMA19}. For all $\rho\in \mathscr{S}(A_0)$ we then have
\begin{align*}
\Psi^{(\E_1,\ldots,\E_m)}(\rho)&=\Big(\Big(\id_{A_1\cdots A_{m-2}}\otimes \E_m^{\star}\Big)\circ \cdots \circ \Big(\id_{A_1}\otimes \E_3^{\star}\Big)\circ \E_2^{\star}\circ \E_1\Big)(\rho) && \text{by definition of $\Psi$} \\
&=\Big(\Big(\id_{A_1\cdots A_{m-2}}\otimes \E_m^{\star}\Big)\circ \cdots \circ \Big(\id_{A_1}\otimes \E_3^{\star}\Big)\Big)\Big(\E_2^{\star}(\E_1(\rho))\Big) && \text{by associativity} \\
&=\Big(\Big(\id_{A_1\cdots A_{m-2}}\otimes \E_m^{\star}\Big)\circ \cdots \circ \Big(\id_{A_1}\otimes \E_3^{\star}\Big)\Big)\Big(\E_2 \star \E_1(\rho)\Big) && \text{by definition of $\E_2^{\star}$} \\
&=\Big(\Big(\id_{A_1\cdots A_{m-2}}\otimes \E_m^{\star}\Big)\circ \cdots \circ \Big(\id_{A_1}\otimes \E_3^{\star}\Big)\Big)\Big((\id_{A_1}\otimes \E_2)(\id_{A_1}\star \E_1(\rho))\Big) && \text{since $\star$ is broadcasting} \\
&=\Big(\id_{A_1}\otimes \Psi^{(\E_2,\ldots,\E_{m})}\Big)\Big(\id_{A_1}\star \E_1(\rho)\Big) && \text{by properties of $\otimes$} \\
&=\Big(\id_{A_1}\otimes \chi^{(\E_2,\ldots,\E_{m})}\Big)\Big(\id_{A_1}\star \E_1(\rho)\Big) && \text{by equation \eqref{INDXTASXTR737}} \\
&=(\E_2,\ldots,\E_{m})\star (\E_1(\rho)) &&  \text{by equation \eqref{CHXRSX347}}  \\
&=\Tr_{A_0}\Big[(\E_1,\ldots,\E_m)\star \rho\Big] && \text{by definition of $\star$} \\
&=\Tr_{A_0}\Big[(\id_{A_0}\otimes \chi^{(\E_1,\ldots,\E_m)})(\id_{A_0}\star \rho)\Big] && \text{by equation \eqref{CHXRSX347}} \\
&=\chi^{(\E_1,\ldots,\E_m)}\Big(\Tr_{A_0}[\,\id_{A_0}\star \rho\,]\Big) && \text{since $\Tr_{A_0}=\Tr\otimes \,\id$} \\
&=\chi^{(\E_1,\ldots,\E_m)}(\rho)\, .  && \text{by definition of $\star$} 
\end{align*}
It then follows from linearity that $\chi^{(\E_1,\ldots,\E_m)}=\Psi^{(\E_1,\ldots \E_m)}$, thus concluding the proof.
\end{proof}

\end{document}